\def\colorful{0}
\newcommand{\blue}[1]{{{\color{blue}#1}}}
\newcommand{\red}[1]{{\color{red} {#1}}}
\newcommand{\blue}[1]{{{#1}}}
\newcommand{\red}[1]{{{#1}}}
\def\Dyes{\mathcal{D}_{\text{yes}}}
\def\Dno{\mathcal{D}_{\text{no}}}
\def\obM{\overline{\bM}}
\def\oM{\overline{M}}
\def\py{p_{\text{yes}}}
\def\pn{p_{\text{no}}}
\def\SSSQ{\textsf{SSSQ}}
\def\SSEQ{\textsf{SSEQ}}
\def\Bin{\mathrm{Bin}}
\def\Alg{\mathrm{Alg}}
\def\eps{\epsilon}
\title{Settling the query complexity of non-adaptive junta testing}
\author{
Xi Chen\thanks{Columbia University, email: \texttt{xichen@cs.columbia.edu}.} 
\and
Rocco A. Servedio\thanks{Columbia University, email: \texttt{rocco@cs.columbia.edu}.}
\and
Li-Yang Tan\thanks{Toyota Technological Institute, email: \texttt{liyang@cs.columbia.edu}.} 
\and
Erik Waingarten\thanks{Columbia University, email: \texttt{eaw@cs.columbia.edu}.}
\and
Jinyu Xie\thanks{Columbia University, email: \texttt{jinyu@cs.columbia.edu}}}
\begin{document}

\def\Ayes{\mathcal{A}_{\text{yes}}}
\def\Ano{\mathcal{A}_{\text{no}}}
\def\py{p}
\def\pn{q}
\def\Alg{\textsc{Alg}}
\def\Vyes{\mathcal{V}_{\text{yes}}}
\def\Vno{\mathcal{V}_{\text{no}}}
\def\Byes{\mathcal{B}_{\text{yes}}}
\def\Bno{\mathcal{B}_{\text{no}}}

\maketitle

\begin{abstract}
We prove that any non-adaptive algorithm that tests whether an unknown 
  Boolean function $f\colon \zo^n\to\zo $ is a $k$-junta or $\eps$-far from every $k$-junta must make $\smash{\widetilde{\Omega}(k^{3/2} / \eps)}$ many queries for a wide range of parameters $k$ and $\eps$. Our result dramatically improves previous lower bounds from \cite{BGSMdW13, STW15}, and is essentially optimal given Blais's non-adaptive junta tester from \cite{Blais08}, which makes  $\widetilde{O}(k^{3/2})/\eps$ queries.  Combined with the adaptive tester of \cite{Blaisstoc09} which makes $O(k\log k + k /\eps)$ queries, our result shows that adaptivity enables polynomial savings in query complexity for junta testing.
  \end{abstract}

\thispagestyle{empty}

\newpage

\setcounter{page}{1}

% !TEX root =  main.tex

\section{Introduction}
This paper is concerned with the power of adaptivity in property testing, specifically property testing of Boolean functions.  At a high level, a property tester for Boolean functions is a randomized algorithm which, given black-box query access to an unknown and arbitrary Boolean function $f\colon \zo^n \to \zo$, aims to distinguish between the case that $f$ has some particular property of interest versus the case that $f$ is far in Hamming distance from every Boolean function satisfying the property.  The main goals in the study of property testing algorithms are to develop testers that make \emph{as few queries} as possible, and to establish lower bounds matching these query-efficient algorithms.  Property testing has by
now been studied for many different types of Boolean functions, including linear functions and low-degree polynomials over $GF(2)$ \cite{BLR93,AKKLRtit,BKSSZ10}, literals, conjunctions, $s$-term monotone and non-monotone DNFs \cite{PRS02,DLM+:07}, monotone and unate functions \cite{GGLRS,FLNRRS,CS13a,CST14,CDST15,KMS15,BB15,KS16,CS16,BMPR16}, various types of linear threshold functions \cite{MORS:10,MORS:09random,BlaisBM11}, size-$s$ decision trees and $s$-sparse $GF(2)$ polynomials and parities \cite{DLM+:07,BlaisBM11,BlaisKane12}, functions with sparse or low-degree Fourier spectrum \cite{GOS+11},
%\enote{Looking forward, I say we start advertising unateness with the recent papers on it ;-)}
and much more.  See e.g.~\cite{Ron:08testlearn,Ron:10FNTTCS,PropertyTestingICS} for some fairly recent broad overviews
of property testing research.

In this work we consider the property of being a \emph{$k$-junta}, which is one of the earliest and most intensively studied properties in the Boolean function property testing literature.  Recall that $f$ is a $k$-junta if it has at most $k$ relevant variables, i.e.,~there exist $k$ distinct indices $i_1,\dots,i_{k}$ and a $k$-variable function $g\colon \{0,1\}^{k}
\to \{0,1\}$ such that $f(x) = g(x_{i_1},\dots,x_{i_{k}})$ for all $x \in \{0,1\}^n$.
Given $k=k(n)\colon\mathbb{N}\rightarrow \mathbb{N}$ and   $\eps=\eps(n)\colon
  \mathbb{N}\rightarrow \mathbb{R}_{>0}$,
we say an algorithm which has black-box access to an unknown and arbitrary $f\colon \zo^n \to \zo$ is an \emph{$\eps$-tester} or \emph{$\eps$-testing algorithm for $k$-juntas}
%\xnote{Let's use $\eps$-testing algorithms for $k$-juntas instead
  %of $(k,\eps)$-testers.} 
  if
  it accepts with probability at least $5/6$ when $f$ is a $k(n)$-junta and rejects with probability at least $5/6$ when
  $f$ is $\eps(n)$-far from all $k(n)$-juntas (meaning that $f$ disagrees with any $k(n)$-junta $g$ on at least
 $\eps(n) \cdot 2^n$ many inputs).

\ignore{
A testing algorithm for $k$-juntas is given as input $k$ and $\eps > 0$, and is provided with black-box
oracle access to an unknown and arbitrary $f\colon \{0,1\}^n \to \{0,1\}$.  The algorithm must output
``yes'' with high probability (say at least 5/6) if $f$ is a $k$-junta, and must output ``no''
with high probability if $f$ disagrees with every $k$-junta on at least an $\eps$
fraction of all possible inputs.
}

Property testers come in two flavors, adaptive and non-adaptive.  An adaptive tester receives the value of $f$ on its $i$-th query string before deciding on its $(i+1)$-st query string, while a non-adaptive tester selects all of its query strings before receiving the value of $f$ on any of them.  Note that non-adaptive testers can evaluate all of their queries in one parallel stage of execution, while this is in general not possible for adaptive testers.  This means that if evaluating a query is very time-consuming, non-adaptive algorithms may sometimes be preferable to adaptive algorithms even if they require more queries. For this and other reasons, it is of interest to understand when, and to what extent, adaptive algorithms can use fewer queries than non-adaptive algorithms (see \cite{RonTsur11,RonServedio:13} for examples of property testing problems where indeed adaptive algorithms are provably more query-efficient than non-adaptive ones).\ignore{abilities and limitations of non-adaptive testing algorithms.}
\ignore{\enote{Can we cite a paper where being non-adaptive was necessary? {\bf Rocco:}  Added cites.  I think we shouldn't get into the actual results in these papers -- both are for problems where adaptive algorithms use exponentially fewer queries than non-adaptive algorithms, but if we highlight this it may implicitly attention to the fact that here for juntas we already knew the gap was at most $k$ versus $k^{3/2}$. We could also cite our own upcoming glorious ICALP submission but maybe it's best not to both b/c it's a little tacky to cite one's own unpublished work, and because it's for a restricted testing problem where we require the LTF guarantee.}
}

The query complexity of adaptive junta testing algorithms is at this point well understood.  In \cite{ChocklerGutfreund:04}
Chockler and Gutfreund showed that even adaptive testers require $\Omega(k)$ queries to distinguish $k$-juntas from random functions on $k+1$ variables, which are easily seen to be constant-far from $k$-juntas.   Blais  \cite{Blaisstoc09}
gave an adaptive junta testing algorithm that uses only $O(k \log k + k/\eps)$ queries, which is optimal (for constant $\eps$) up to a  multiplicative factor of $O(\log k)$.

Prior to the current work, the picture was significantly less clear for non-adaptive junta testing.  In the first work on junta testing, Fischer et al.~\cite{FKRSS03} gave a non-adaptive tester that makes $O(k^2 (\log k)^2/\eps)$ queries.  This was improved by Blais
\cite{Blais08}
%, who gave
with a non-adaptive tester that uses only $O(k^{3/2} (\log k)^3/\eps)$ queries.   On the lower bounds side, \cite{Blais08} also showed that for all $\eps \geq k/2^k$, any non-adaptive algorithm for $\eps$-testing $k$-juntas must make $\Omega\left( k/ ({\eps \log(k/\eps)})\right)$ queries. Buhrman~et al. \cite{BGSMdW13} gave an $\Omega(k \log k)$ lower bound (for constant $\eps$) for non-adaptively testing whether a function $f$ is a size-$k$ parity; their argument also yields an $\Omega(k \log k)$ lower bound (for constant~$\eps$) for non-adaptively $\eps$-testing $k$-juntas.
More recently, \cite{STW15} obtained a new lower bound for~non-adaptive junta testing that is incomparable to both the \cite{Blais08} and the \cite{BGSMdW13} lower bounds.  They showed that for all $\eps:k^{-o_k(1)} \le \eps \le o_k(1)$, any non-adaptive $\eps$-tester for $k$-juntas must make
\[
\Omega\left({\frac {k \log k}
{\eps^{c} \log( \log(k)/\eps^{c})}}\right)
\]
many queries, where $c$ is any absolute constant less than 1.  For certain restricted values of $\eps$ such as $\eps=1/\log k$, this lower bound is larger than the $O(k/\eps + k \log k)$ upper bound for \cite{Blaisstoc09}'s adaptive algorithm, so the \cite{STW15} lower bound shows that in some restricted settings, adaptive junta testers can outperform non-adaptive ones.  However, the difference in performance is quite small, at most a $o(\log k)$ factor.  We further note that all of the lower bounds \cite{Blais08,BGSMdW13,STW15} are of the form $\widetilde{\Omega}(k)$ for constant $\eps$, and hence rather far from the $\widetilde{O}(k^{3/2})/\eps$ upper bound of \cite{Blais08}.

\ignore{
\blue{

We motivate our work by observing that juntas are a very basic type of Boolean function whose study intersects many different areas within theoretical computer science.  In complexity theory and cryptography, $k=O(1)$-juntas are precisely the Boolean functions computed by $\mathsf{NC^0}$ circuits.  Juntas arise naturally in settings where a small (unknown) set of features determines the label of a high-dimensional data point, and hence many researchers in learning theory have studied juntas across a wide range of different learning models, see e.g.~\cite{Blum:94,DhagatHellerstein:94,BlumLangley:97,GTT:99,ArpeReischuk:03, Mos:04,AticiServedio:07qip,ArpeMossel10,FGKP:journal,gregvaliantfocs12,DSFTWW15}.
Finally, the problem of testing whether
an unknown Boolean function is a $k$-junta is one of the most thoroughly studied questions in
Boolean function property testing.  We briefly survey relevant previous work on testing juntas in the following subsection.

\subsection{Prior work on testing juntas}

Fischer et al.~\cite{FKRSS03} were the first to explicitly consider the junta testing problem.  Their
influential paper gave several algorithms for testing $k$-juntas, the most efficient of which is
a non-adaptive tester that makes $O(k^2 (\log k)^2/\eps)$ queries.  This was improved by Blais
\cite{Blais08} who gave a non-adaptive testing algorithm that uses only $O(k^{3/2} (\log k)^3/\eps)$ queries;
this result is still the most efficient known non-adaptive junta tester.  Soon thereafter Blais \cite{Blaisstoc09}
gave an \emph{adaptive} junta testing algorithm that uses only $O(k \log k +
k/\eps)$ queries, which remains the most efficient known junta testing
algorithm to date.

We note that ideas and and techniques from these junta testing algorithms have played an important role in a broad range of algorithmic results for
other Boolean function property testing problems.    These include efficient algorithms for testing various classes of functions, such as $s$-term DNF
formulas, small Boolean circuits, and sparse $GF(2)$ polynomials, that are close to juntas but
not actually juntas themselves (see e.g.~\cite{DLM+:07,GOS+11,DLMW10:algorithmica,chakraborty2011efficient}),  as well as algorithms for testing linear threshold functions \cite{MORS:10} (which in general are not close to juntas).
Junta testing is also closely related to the problem of Boolean function isomorphism testing, see e.g.~\cite{BO10, BWY12, ChakrabortyFGM12, AlonBCGM13}.

Lower bounds for testing $k$-juntas have also been intensively studied.  The original
\cite{FKRSS03} paper gave an $\Omega(\sqrt{k}/\log k)$ lower bound for nonadaptive algorithms that test whether
an unknown function is a $k$-junta versus constant-far from every $k$-junta.
Chockler and Gutfreund \cite{ChocklerGutfreund:04}
simplified, strengthened and extended this lower bound by proving that even \emph{adaptive} testers require
$\Omega(k)$ queries to distinguish $k$-juntas from random functions on $k+1$ variables, which are easily seen to be constant-far
from $k$-juntas.   (We describe the construction and sketch
the \cite{ChocklerGutfreund:04} argument in Section \ref{sec:techniques}
below).  Blais~\cite{Blais08} was the first to give a lower bound that
involves the distance parameter $\eps$; he showed that for $\eps \geq k/2^k$,
any non-adaptive algorithm for $\eps$-testing $k$-juntas must
make $\Omega\left( {\frac k {\eps \log(k/\eps)}}\right)$ queries.

In recent years numerous other works have given junta testing lower bounds.  In \cite{BlaisBM11} Blais, Brody and Matulef established a connection between lower bounds in communication complexity and property testing lower bounds, and used this connection
(together with known lower bounds on the communication complexity of the size-$k$ set disjointness problem) to give a different proof of an $\Omega(k)$ lower bound for adaptively testing whether a function is a $k$-junta versus constant-far from every $k$-junta.  More recently,
Blais, Brody and Ghazi \cite{BlaisBG14} gave new bounds on the communication complexity of the Hamming distance function, and used these bounds to give an alternate proof of the $\Omega(k)$ lower bound for adaptive junta testing algorithms via the \cite{BlaisBM11} connection.
Blais and Kane  \cite{BlaisKane12} studied the problem of testing whether an $n$-variable Boolean function is a size-$k$ parity function (as noted in \cite{BlaisKane12}, lower bounds for this problem give lower bounds for testing juntas), and via a geometric and Fourier-based analysis gave a $k-o(k)$ lower bound for adaptive algorithms and a $2k-O(1)$ lower bound for non-adaptive algorithms, again for $\eps$ constant.
Buhrman et al.~\cite{BGSMdW13} combined the communication complexity based approach of
\cite{BlaisBM11} with an $\Omega(k \log k)$ lower bound for the one-way communication complexity of
$k$-disjointness to obtain an $\Omega(k \log k)$ lower bound (for constant $\eps$) for testing whether a function $f$ is a size-$k$ parity, and hence for testing whether $f$ is a $k$-junta.

\subsection{Our main result:  Adaptivity helps for testing juntas}

While the junta testing problem has been intensively studied, the results
described above still leave a gap between the query complexity of the best
\emph{adaptive} algorithm \cite{Blaisstoc09} and the strongest known lower
bounds for \emph{non-adaptive} junta testing.  The lower bounds of
$\Omega\left( {\frac k {\eps \log(k/\eps)}}\right)$ from \cite{Blais08} and
$\Omega(k \log k)$ (for $\eps$ constant) from \cite{BGSMdW13} are incomparable, but neither of them is strong enough, for any setting of
$\eps$, to exceed the $O(k \log k + k/\eps)$ upper bound from
\cite{Blaisstoc09}.  In \cite{Blais08} Blais asked as an open question
``\emph{Is there a gap between the query complexity of adaptive and non-adaptive algorithms for
testing juntas?}''  This question was reiterated in a 2010 survey article on testing juntas,
in which Blais explicitly asked   ``\emph{Does adaptivity help when testing
$k$-juntas?}'', referring to this as a ``basic problem''~\cite{Blais10survey}.

Our main contribution in the present work is to give a better lower bound on non-adaptive junta testing algorithms
which implies that the answer to the above questions is ``yes.''  We prove
the following:

\begin{theorem} \label{thm:main}
Let $\calA$ be any non-adaptive algorithm which tests whether an unknown black-box $f\colon
\{0,1\}^n \to \{0,1\}$ is a $k$-junta versus $\eps$-far from every $k$-junta. Then for all $\eps $ satisfying $k^{-o_k(1)} \le \eps \le o_k(1)$, algorithm $\calA$ must make at least
\begin{equation} \label{eq:qbound}
q =
{\frac {C k \log k}
{\eps^{c} \log( \log(k)/\eps^{c})}}
\end{equation}
queries, where $c$ is any absolute constant $< 1$ and $C>0$ is an absolute constant.
\end{theorem}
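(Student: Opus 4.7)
The plan is to apply Yao's minimax principle: I will construct two distributions $\Dyes$ and $\Dno$ on Boolean functions $f\colon \zo^n \to \zo$, with $\Dyes$ supported on $k$-juntas and $\Dno$ supported (with probability $1 - o(1)$) on functions that are $\eps$-far from every $k$-junta, and then show that for any deterministic non-adaptive algorithm making $q$ queries below the claimed bound, the total variation distance between the induced distributions on the $q$ query answers is $o(1)$. By Yao's principle this suffices to rule out any randomized non-adaptive $\eps$-tester of the same query complexity.

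Following an \cite{STW15}-style construction, I would partition $[n]$ into $N = k + s$ roughly equal blocks $B_1,\dots,B_N$, where the ``excess'' parameter $s = \Theta\!\left( \log k / \eps^c \right)$ governs the gap between the two cases. A draw from $\Dyes$ picks a uniformly random subset $S \subseteq [N]$ of size $k$, selects a single representative variable $r_i \in B_i$ for each $i \in S$, and sets $f(x) = g(x_{r_1},\dots,x_{r_k})$ for a uniformly random $g\colon \zo^k \to \zo$; this is automatically a $k$-junta. A draw from $\Dno$ promotes all $N$ blocks to ``active'' status, with each block contributing a slightly randomized (e.g.\ biased or noisy) function of its variables so that no single extra block is detectable from few queries, and combines them via a random $g\colon \zo^N \to \zo$. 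Showing such an $f$ is $\eps$-far from every $k$-junta reduces to a Fourier/influence calculation: any $k$-junta approximation must commit to $k$ of the $N$ active blocks as ``relevant,'' and the disagreement contributed by each of the remaining $s$ blocks sums to $\Omega(\eps)$ by independence.

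The main obstacle is the indistinguishability step. For a fixed non-adaptive query set $Q = \{x^1,\dots,x^q\}$, I would study, block by block, the ``profile matrix'' $A \in \zo^{q \times N}$ whose $(j,i)$ entry records the restriction of $x^j$ to $B_i$ (relative to some fixed reference). The marginals of $\Dyes$ and $\Dno$ on $(f(x^1),\dots,f(x^q))$ decouple across blocks given the profile, so I would tensorize Hellinger (or chi-squared) distance over the $N$ independent blocks and bound each factor by a birthday-style counting argument: only those blocks for which many query pairs ``witness'' a distinguishing pattern contribute nontrivially, and a second-moment calculation shows the number of such blocks stays small unless $q$ exceeds the claimed threshold. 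Pinning down the denominator exactly, in particular the awkward $\log(\log(k)/\eps^c)$ factor, is the most delicate point; I would handle it by carefully tracking the expected number of query pairs that reveal the representative inside a single block, and then combining this with a union/averaging bound over the $N$ blocks to conclude $\mathrm{TV}(\Dyes|_Q, \Dno|_Q) = o(1)$ for $q$ strictly below the stated bound.
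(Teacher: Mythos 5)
Your high-level framework (Yao's principle, a yes-distribution of juntas and a no-distribution of far-from-junta functions, a TV bound on the induced answer distributions) is the right shell, but the construction and the key indistinguishability step have genuine gaps. The most concrete one is that your hard distributions do not encode $\eps$ anywhere except through the number $s$ of excess blocks. A random outer function $g$ applied to $N=k+s$ active blocks is already $\Omega(1)$-far from every $k$-junta when $s=1$, and increasing $s$ only makes the no-functions \emph{farther} from $k$-juntas, not $\eps$-close to being juntas. So the construction as described cannot yield a lower bound that grows as $\eps\to 0$; at best it could give $\widetilde{\Omega}(k)$ for constant $\eps$. To obtain the $1/\eps^{c}$ factor the no-functions must be only barely far, which requires building sparsity into the construction (in this paper, each inner function $h_i$ depends on a random subset $S_i$ of density $\eps/\sqrt{n}$, which simultaneously makes the no-functions only $\eps$-far and makes each query reveal information about a given coordinate with probability only $\eps/\sqrt{n}$ --- that is precisely where the query lower bound's $\eps$-dependence comes from). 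Second, your central claim that the marginals of $(f(x^1),\dots,f(x^q))$ ``decouple across blocks given the profile'' is false for the construction you describe: a single random outer $g$ couples all blocks, and the answer distribution is determined by the collision pattern of the inner encodings, i.e.\ an AND over blocks of per-block separation events. The per-block latent randomness is independent, but the observed answers are a non-product function of it, so tensorizing Hellinger or $\chi^2$ over blocks is not valid as stated; handling this coupling is exactly the hard part (it is why \cite{STW15} needed a delicate bounded-differences argument with bad events). Third, the far-ness argument (``any $k$-junta must commit to $k$ of the $N$ blocks'') is not justified, since a $k$-junta need not respect the block structure; one needs an isoperimetry-type or vertex-disjoint bichromatic-edge argument. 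Finally, $\Dno$ itself (``a slightly randomized, e.g.\ biased or noisy, function'') is never actually defined, and you explicitly defer the quantitative endgame that produces the $\log(\log k/\eps^{c})$ denominator.

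For comparison, the paper does not prove this statement by an \cite{STW15}-style block argument at all. It proves the stronger bound $\widetilde{\Omega}(k^{3/2}/\eps)$ (which subsumes the stated one in the given range of $\eps$) via an addressing construction: a random set $M$ of addressing coordinates selects one of $2^{|M|}$ independent random functions $h_i$, each over a sparse random subset $S_i$ of a hidden set $A$ whose density differs slightly ($1/2$ versus $1/2+\log n/\sqrt{n}$) between $\Dyes$ and $\Dno$. The indistinguishability is then established not by a direct TV computation on the function values but by a chain of reductions --- to the artificial problem \SSSQ, then to \SSEQ\ --- culminating in an upper bound on the total variation distance between two binomial distributions. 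If you want to salvage your outline, you would need to (i) rebuild the no-distribution so that its distance from $k$-juntas scales with $\eps$, and (ii) replace the tensorization step with an argument that genuinely handles the coupling through the shared outer function.
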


For suitable choices of $\eps$, such as $\eps = 1/(\log k)$, the
lower bound of Theorem~\ref{thm:main} is asymptotically larger than the
$O(k \log k + k/\eps)$ upper bound of the \cite{Blaisstoc09} adaptive
algorithm.  Thus, together with the
\cite{Blaisstoc09} upper bound, our lower bound gives an affirmative answer
to the question
posed in \cite{Blais08,Blais10survey}:  adaptivity helps for testing $k$-juntas.\footnote{We  note in this context that several other natural Boolean function classes are known to exhibit a gap between the query complexity of adaptive versus non-adaptive testing algorithms.  These include the class of signed majority functions
\cite{MORS:09random,RonServedio:13}
and the class of read-once width-two OBDD \cite{RonTsur-OBDD}.  In all three cases  the adaptive tester which beats the best possible non-adaptive tester may be viewed as performing some sort of binary search.}

It is interesting that while all of the recent junta testing lower bounds \cite{BlaisBM11, BlaisBG14, BGSMdW13} employ the connection with communication complexity lower bounds that was established in \cite{BlaisBM11}, our proof of
Theorem \ref{thm:main} does not follow this approach.  Instead, we give a proof using Yao's classic minimax principle; however,
our argument is somewhat involved, employing a new Boolean isoperimetric inequality and a very delicate application of a variant of McDiarmid's ``method of bounded differences'' that allows for a (low-probability) bad event.
In the rest of this section we motivate and explain our approach at a high level before giving the full proof in the subsequent sections.
}

}

\subsection{Our results}

The main result of the paper is the following theorem:
\begin{theorem}
\label{thm:main-intro}
Let $\alpha\in (0.5,1)$ be an absolute constant.
Let $k=k(n)\colon\mathbb{N}\rightarrow \mathbb{N}$ and $\eps=\eps(n)\colon
  \mathbb{N}\rightarrow \mathbb{R}_{>0}$ be two functions that satisfy
$ k(n)\le \alpha n$ and $2^{-n}\le \eps(n)\le 1/6$ for all sufficiently large $n$.
%\xnote{I think we can change the $1/240$ to
%  $1/2$, which looks slightly better but will do so after a pass on the proof.}
%There exists a universal constant $C > 0$ such that for any $n \geq C$ the following property holds. Fix any constant $\alpha \in (\frac{1}{2}, 1)$ and let $k \leq \alpha n$. For $2^{-n} \leq \eps \leq \frac{1}{240}$,
Then any non-adaptive $\eps$-tester for $k$-juntas must make
  $\widetilde{\Omega}(k^{3/2} / \eps)$ many queries.
\end{theorem}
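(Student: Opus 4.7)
My plan is to apply Yao's minimax principle and exhibit two distributions $\Dyes$ and $\Dno$ over Boolean functions $f\colon \{0,1\}^n \to \{0,1\}$ such that $\Dyes$ is supported on $k$-juntas, a random draw from $\Dno$ is $\eps$-far from every $k$-junta with probability $1-o(1)$, but no non-adaptive query set $Q$ of size $q = \widetilde{o}(k^{3/2}/\eps)$ can distinguish them. The natural construction template is to first draw a uniform random partition $\pi = (A_1,\dots,A_M)$ of $[n]$ into $M$ blocks with $M$ of order $k^2/\eps^2$, and then to let $f$ depend on $\pi$ through ``block signatures'' such as $s_j(x) = \bigoplus_{i \in A_j} x_i$. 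Under $\Dyes$ one selects $k$ relevant blocks uniformly at random and lets $f$ be a random function on their signatures; under $\Dno$ one keeps the same structure but promotes an additional set of $L$ ``perturbation'' blocks (with $L$ of order $\sqrt{k}/\eps$) to relevant status, calibrating their joint effect so that the resulting function has Hamming distance at least $\eps$ to every $k$-junta.

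The proof would then proceed in three stages. First, a soundness lemma for $\Dno$: I would show that the joint contribution of any $L$ perturbation blocks pushes $f$ $\eps$-far from every $k$-junta with high probability. This is an isoperimetric statement that should follow from lower-bounding the mass of the Fourier coefficients of $f$ supported outside any fixed set of $k$ coordinates; the parameters $M$ and $L$ must be tuned precisely so that the ``per-block'' influence is of order $\eps/\sqrt{k}$ and the aggregate squared influence is of order $\eps^2$. Second, the heart of the argument is a collision analysis: for a fixed non-adaptive set $Q$ with $|Q|=q$ and for each block $A_j$, look at the partition of $Q$ induced by $s_j(\cdot)$. A block is ``visible'' to $Q$ only if it produces enough colliding queries, and a birthday-type calculation shows that for $M$ of order $k^2/\eps^2$ and $q = \widetilde{o}(k^{3/2}/\eps)$, the expected number of blocks producing enough collisions to separate relevant from perturbation blocks is $\widetilde{o}(1)$. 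Third, this information-theoretic statement is converted to a total variation bound on $(f(x))_{x\in Q}$ under $\Dyes$ versus $\Dno$ using a coupling that succeeds on every block exhibiting a ``regular'' collision pattern.

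The main obstacle I anticipate is the third step: even after conditioning on a well-behaved collision pattern, the response vector $(f(x))_{x\in Q}$ has subtle correlations introduced by the random choice of which blocks are relevant, and the joint statistics must agree across $\Dyes$ and $\Dno$ up to $o(1)$. Local KL or Hellinger bounds applied block-by-block appear to lose spurious $\mathrm{polylog}(k)$ factors, so the analysis likely must control the joint distribution of responses globally, via a concentration inequality of McDiarmid type that is robust to a low-probability bad event (when the partition is unusually unbalanced or a block is anomalously resolved by $Q$). Designing this bad-event carve-out so that it is simultaneously rare enough under both distributions and rich enough to make the bounded-differences step go through is, I expect, where the bulk of the technical work sits and what ultimately pins down the $\widetilde{\Omega}(k^{3/2}/\eps)$ lower bound.
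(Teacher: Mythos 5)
Your high-level framework (Yao's minimax, two distributions, a ``collision'' intuition for non-adaptive queries) is shared with the paper, but the concrete construction you propose has a fatal flaw on the completeness side. You define $\boldf\sim\Dyes$ as a random function of $k$ block signatures $s_j(x)=\bigoplus_{i\in A_j}x_i$. Unless every selected block is a singleton, $\boldf$ depends on \emph{all} of the $\Theta(k\cdot n/M)$ coordinates lying in the union of the $k$ selected blocks, so it is a $\Theta(kn/M)$-junta, not a $k$-junta. With $M\approx k^2/\eps^2$ this number of relevant variables is $\Theta(n\eps^2/k)$, which exceeds $k$ whenever $k<\sqrt{n}\,\eps$; and in the complementary regime $k\geq\sqrt{n}\,\eps$ (including the case $k=\alpha n$ that the paper reduces to) you have $M\geq n$, so the partition degenerates to singleton blocks, parities become single coordinates, and the construction collapses to a Chockler--Gutfreund-style instance whose hardness is only $\Omega(k)$. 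Thus the parameter choice that makes your Fourier-mass and perturbation calculations come out right is incompatible with the requirement that $\Dyes$ be (nearly) supported on $k$-juntas; the construction cannot be repaired by retuning $M$ and $L$ alone.

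The paper avoids this by separating the roles your blocks conflate: a random set $\bM$ of $\Theta(n)$ \emph{addressing} coordinates indexes into $N=2^{|\bM|}$ independent random functions $\bh_1,\dots,\bh_N$, each depending only on a small random subset $\bS_i$ of a random ``candidate'' set $\bA\subseteq\overline{\bM}$. The relevant variables of $\boldf$ are exactly $\bM\cup\bA$, and $\Dyes$ vs.\ $\Dno$ differ only in the sampling density of $\bA$, shifting $|\bA|$ by $\Theta(\sqrt{n}\log n)$ around the $k$-junta threshold. The lower bound is then obtained by a reduction to a pure hypothesis-testing problem about $|\bA|$ (the tasks \SSSQ\ and \SSEQ), and the final step is a total-variation upper bound between two Binomial distributions (Claim~\ref{lem:dtvbound}) --- not a McDiarmid-type bounded-differences argument with a bad-event carve-out. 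So neither the construction nor the analysis you sketch matches the paper's proof, and as written the construction does not establish the theorem.
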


Together with the $\widetilde{O}(k^{3/2})/\eps$ non-adaptive upper bound from \cite{Blais08}, Theorem~\ref{thm:main-intro} settles the query complexity of non-adaptive junta testing up to poly-logarithmic factors.

\subsection{High-level overview of our approach}

Our lower bound approach differs significantly from previous work.  Buhrman et al.~\cite{BGSMdW13} leveraged the connection between communication complexity lower bounds and property testing lower bounds that was established in the work of
\cite{BlaisBM11} and applied an $\Omega(k \log k)$ lower bound on the one-way communication complexity of $k$-disjointness to establish their lower bound.  Both \cite{Blais08} and \cite{STW15} are based on edge-isoperimetry results for the Boolean hypercube (the edge-isoperimetric inequality of
Harper \cite{Harper64},
Bernstein \cite{Bernstein67},
Lindsey~\cite{Lin64},
and Hart \cite{Hart76} in the case of \cite{Blais08}, and a slight extension of a result of Frankl~\cite{Frankl83} in \cite{STW15}).  In contrast, our lower bound argument takes a very different approach; it consists of a sequence of careful reductions, and employs an \emph{upper} bound on the total variation distance between two Binomial distributions (see Claim~\ref{lem:dtvbound}).

%\rnote{I think we should say something like the above about how our techniques are different from the previous work.  But sometimes you hear people diss property testing papers by saying ``I'm interested in property testing because of the structural results that are proved along the way, more than because of the actual testing results themselves per se; so I'm only so-so on thus-and-such paper because it didn't really provide any new structural insights.''  (Yes, I have gotten this review before as you can guess.)  We might be at risk of being told this, but I am not sure what we can do about it. \blue{Xi: I am not sure if we should add a paragraph about the
%  construction of $\Dyes$ and $\Dno$. I will try to draft one after making a pass on the proof again.}}

Below we provide a high level overview of the proof of the lower bound given by Theorem~\ref{thm:main-intro}. First, it is not difficult to show that Theorem~\ref{thm:main-intro} is a consequence of the following more specific lower bound for the case where  $k = \alpha n$:

\begin{theorem}
\label{thm:main}
%There exists a universal constant $C > 0$ such that for all $n > C$, the following property holds.
Let $\alpha\in (0.5,1)$ be an absolute constant.
%Let
%  $k(n)=\alpha n$.
%There is a positive constant $c$ such that
Let $k=k(n)\colon\mathbb{N}\rightarrow \mathbb{N}$ and $\eps=\eps(n)\colon
  \mathbb{N}\rightarrow \mathbb{R}_{>0}$ be two functions that satisfy $k(n)=\alpha n$ and
%$ k(n)\le \alpha n$ and $2^{-n}\le \eps(n)\le 1/240$.
%When $\epsilon(n)$ satisfies
$ 2^{-(2\alpha-1)n/2}\le \eps(n)\le {1/6}%\ \ \ \xnote{Same hope to change it to $1/2$.}
$
for sufficiently large $n$.
Then any non-adaptive $\eps$-tester for $k$-juntas
  %which tests whether an unknown Boolean function $f$ is a $k$-junta or is
%$\eps$-far from being a $k$-junta
  must make
  $\widetilde{\Omega}(n^{3/2}/\eps)$ many queries.
\end{theorem}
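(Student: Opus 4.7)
My plan is to apply Yao's minimax principle and construct two distributions $\Dyes$ and $\Dno$ over Boolean functions $f\colon \{0,1\}^n \to \{0,1\}$ with the following three properties: (i) every $f$ in the support of $\Dyes$ is a $k$-junta; (ii) a draw $f \sim \Dno$ is $\eps$-far from every $k$-junta with probability at least $5/6$; and (iii) for every non-adaptive query set $Q = \{x^{(1)},\ldots,x^{(q)}\} \subseteq \{0,1\}^n$ with $q = \widetilde{o}(n^{3/2}/\eps)$, the total variation distance between the distributions of $(f(x^{(1)}),\ldots,f(x^{(q)}))$ induced by $\Dyes$ and $\Dno$ is at most $1/3-o(1)$. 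A standard minimax argument then shows that no non-adaptive $\eps$-tester can succeed with the required probability using $q$ queries.

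For the construction, I would use a relevant-set randomization. Under $\Dyes$, pick a uniformly random subset $S \subseteq [n]$ of size $k$ (the relevant coordinates) and let $f$ be a (possibly biased) random function of $x|_S$. Under $\Dno$, keep $|S|=k$ but add a carefully controlled perturbation that depends on the remaining $n-k$ coordinates; e.g., resample the value of $f$ on an $\eps$-fraction of inputs according to a rule that reads bits in $[n]\setminus S$ and is provably not simulable by any $k$-junta. The perturbation must (a) push $f$ robustly $\eps$-far from every $k$-junta, which I would verify by a concentration/union-bound argument over all $\binom{n}{k}$ candidate relevant sets, and (b) be invisible to any non-adaptive query budget of $\widetilde{o}(n^{3/2}/\eps)$.

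For the indistinguishability analysis I would perform a sequence of reductions that successively simplify $(\Dyes,\Dno)$ to a pair of distributions admitting a clean TV comparison. A natural outline: first, condition on $S$ and on the projection pattern $(x^{(i)}|_S)_i$, which encodes all correlations between query answers; second, argue that under both distributions most queries decouple and the only discriminating information is an aggregate ``signature'' statistic, e.g.\ a count of collision-like events or biased Bernoulli indicators across the $q$ queries; third, couple this aggregate statistic under $\Dyes$ and $\Dno$ to two Binomial random variables whose parameters differ by $\Theta(\eps/\mathrm{poly}(n))$. Applying Claim~\ref{lem:dtvbound} to these two Binomials then gives a TV bound that is $o(1)$ precisely when $q = \widetilde{o}(n^{3/2}/\eps)$.

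The main obstacle will be the joint calibration in the construction: $\Dno$ must simultaneously be $\eps$-far from $k$-juntas in Hamming distance---a global property of $f$---and locally close enough to $\Dyes$ in its query-view that the reduction chain terminates at two Binomials with the right parameter gap. I expect the restriction $\eps \ge 2^{-(2\alpha-1)n/2}$ in the theorem to mark exactly where this balance breaks down: once $\eps$ drops below a threshold determined by the combinatorial parameters of the construction (roughly $2^{-(n-k)/\text{const}}$, where $n-k=(1-\alpha)n$ governs the size of the ``irrelevant'' subspace available to absorb the perturbation), either the farness guarantee of $\Dno$ or the TV bound for the reduction will fail.
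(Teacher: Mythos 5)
Your skeleton (Yao's principle, two function distributions, a chain of reductions ending in a total-variation comparison of two Binomials) matches the paper's, but the construction you propose is materially different from the one that makes the argument work, and the part you leave unspecified is exactly where the $n^{3/2}/\eps$ exponent is generated. In the paper, $\Dno$ is \emph{not} a perturbation of a $\Dyes$-function on an $\eps$-fraction of inputs. Both distributions have the identical two-level structure: a random set $\bM$ of roughly $(2\alpha-1)n$ ``addressing'' coordinates selects, for each of the $2^{|\bM|}$ assignments, an independent random function $\bh_i$ over a random subset $\bS_i$ of a hidden set $\bA\subseteq\obM$, where each element of $\bA$ enters $\bS_i$ with probability only $\eps/\sqrt{n}$. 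The \emph{only} difference between yes and no is the density of $\bA$: $1/2$ versus $1/2+\log n/\sqrt{n}$, so that the number of relevant variables exceeds $k$ by only $\Theta(\sqrt{n}\log n)$ in the no-case. This calibration is what forces a tester to (a) decide membership in $\bA$ for $\widetilde{\Omega}(n)$ coordinates, each of which costs $\Omega(\sqrt{n}/\eps)$ queries because a coordinate of $\bA$ influences a given query only when it lands in the relevant $\bS_i$ (probability $\eps/\sqrt{n}$), and (b) resolve the count to additive precision $\sqrt{n}\log n$, which is where the two Binomials with success probabilities $p\lambda_j$ and $q\lambda_j$ come from.

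Your construction --- an exact $k$-junta versus the same junta resampled on an $\eps$-fraction of inputs via ``a rule that reads bits in $[n]\setminus S$ and is provably not simulable by any $k$-junta'' --- does not set this up. As stated it is a placeholder rather than a construction: nothing in it produces the $\log n/\sqrt{n}$ density gap or the $\eps/\sqrt{n}$ detection rate, so the claim that the discriminating statistic couples to two Binomials whose total variation distance is $o(1)$ exactly when $q=\widetilde{o}(n^{3/2}/\eps)$ is unsupported. If the perturbation makes $\Omega(n)$ extra coordinates relevant at full strength, far fewer queries suffice to notice the discrepancy; constructions of that flavor are essentially why earlier lower bounds stop at $\widetilde{\Omega}(k)$. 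You would also need the intermediate reductions the paper supplies (to \SSSQ\ and then \SSEQ) to justify collapsing the $q$ query answers to a low-dimensional summary statistic; this step uses that a good $\bM$ forces queries with the same $M$-projection to differ in only $O(\log(n/\eps))$ coordinates, another feature your construction does not provide. Finally, your reading of the constraint $\eps\ge 2^{-(2\alpha-1)n/2}$ is off: it arises because the farness argument needs $\eps\cdot 2^{|\bM|}=2^{\Omega(n)}$ cells $i$ with $\bS_i\cap V\ne\emptyset$ for concentration, so it is governed by the number of addressing cells $2^{(2\alpha-1)n-o(n)}$, not by the size $(1-\alpha)n$ of the irrelevant subspace.
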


See Appendix~\ref{app:1} for the proof that Theorem~\ref{thm:main} implies Theorem~\ref{thm:main-intro}.

We now provide a sketch of how Theorem~\ref{thm:main} is proved.\ignore{The constant $\alpha \in (0.5,1)$ in the statement of Theorem~\ref{thm:main} is
a technical necessity for the lower bounds, though it will} It may be convenient for the reader,
  on the first reading, to consider $\alpha = 3/4$ and to think of $\eps$ as being a small constant such as $0.01$.

Fix a sufficiently large $n$. Let
  $k=\alpha n$ and $\eps=\eps(n)$ with $\eps$ satisfying the
  condition in Theorem \ref{thm:main}.
%$$
%2^{-(2\alpha-1)n/3}\le \eps(n)\le \frac{1}{240}.
%$$
%We seek to prove the following theorem:
We proceed by Yao's principle and prove lower bounds for deterministic
  non-adaptive algorithms which receive inputs drawn from one of two probability distributions, $\Dyes$ and $\Dno$, over $n$-variable Boolean functions.  The distributions $\Dyes$ and $\Dno$ are designed so that a Boolean function $\boldf \leftarrow \Dyes$ is a $k$-junta with probability $1 - o(1)$ and $\boldf \leftarrow \Dno$ is $\eps$-far from every $k$-junta with probability $1 - o(1)$. In Section~\ref{sec:distributions} we define $\Dyes$ and~$\Dno$, and establish the above properties.
%Note that the distributions we define are with respect to a fixed $k$, where we will set $k = \frac{3n}{4} + \sqrt{n \log n}$. Thus,
%For these distributions, we will
By Yao's principle, it then suffices to show that any $q$-query non-adaptive deterministic algorithm (i.e., any set of $q$ queries) that succeeds in distinguishing them must have $q=\widetilde{\Omega}(n^{3/2}/\eps)$.
%Even though we restricted ourselves to a specific value of $k$, we can always add either non-influential variables in order for our lower bound to work for all $k = O(n)$. Additionally, our lower bound proof will require a lower bound on $\eps$, and $2^{n/4}$ is a sufficient lower bound.

This lower bound proof consists of two components:
\begin{flushleft}\begin{enumerate}
\item A reduction from a simple algorithmic task called Set-Size-Set-Queries (\SSSQ\
 for short), which we discuss informally later in this subsection and we define formally in Section~\ref{sec:simple-task}.  This reduction implies that the non-adaptive deterministic query complexity of distinguishing
  $\Dyes$ and $\Dno$ is at least as large as that of \SSSQ.
\item A lower bound of $\widetilde{\Omega}(n^{3/2}/\eps)$
  for the query complexity of \SSSQ.
\end{enumerate}\end{flushleft}

Having outlined the formal structure of our proof, let us give some intuition which may hopefully be helpful in motivating our construction and reduction.  Our yes-functions and no-functions have very similar structure to each other, but are constructed with slightly different parameter settings. The first step in drawing a random function from $\Dyes$ is choosing a uniform random subset $\bM$ of $\Theta(n)$ ``addressing'' variables from $x_1,\dots,x_n$.  A random subset $\bA$ of the complementary variables $\overline{\bM}$ is also selected, and for each assignment to the variables in $\bM$ (let us denote such an assignment by $i$), there is an independent random function $\bh_i$ over a randomly selected subset $\bS_i$ of the variables in $\bA.$  A random function from $\Dno$ is constructed in the same way, except that now the random subset $\bA$ is chosen to be slightly larger than in the yes-case.  This disparity in the size of $\bA$ between the two cases causes random functions from $\Dyes$ to almost always be $k$-juntas and random functions from $\Dno$ to almost always be far from $k$-juntas.

An intuitive explanation of why this construction is amenable to a lower bound for non-adaptive algorithms is as follows.  Intuitively, for an algorithm to determine that it is interacting with (say) a random no-function rather than a random yes-function, it must determine that the subset $\bA$ is larger than it should be in the yes-case.  Since the set $\bM$ of $\Theta(n)$ many ``addressing'' variables~is selected randomly, 
%and is unknown to the testing algorithm, 
if a non-adaptive algorithm uses two query strings $x,x'$ that differ in more~than a few coordinates, it is very likely that they will correspond to two different random functions $\bh_i,\bh_{i'}.$
Hence every pair of query strings $x,x'$ that correspond to the same  $\bh_i$ 
  can differ only in~a few coordinates in $\overline{\bM}$, with high probability, which significantly limits the
  power of a non-adaptive algorithm in distinguishing $\Dyes$ and $\Dno$ no matter which set of  query strings it picks.
   This makes it possible for us to reduce from the \SSSQ~problem to the problem of distinguishing 
   $\Dyes$ and $\Dno$ at the price of only a small quantitative cost in query complexity, see Section~\ref{sec:reduction}.

At a high level, the $\SSSQ$ task involves distinguishing\ignore{\rnote{Was ``whether
  a sequence of $n$ random bits is biased or not.
The random bits are used to define a hidden set which the algorithm has (a limited form~of) query access to.'' But it seems weird to say that the random bits define the hidden set, so I changed it to the stuff in red.}}
whether or not a hidden set (corresponding to $\bA$)  is ``large.''  An algorithm for this task can only access certain random bits, whose biases are determined by the hidden set and whose exact distribution is inspired by the exact definition of the random functions $\bh_i$ over the random subsets $\bS_i$. Although \SSSQ ~is an artificial problem, it is much easier to work with compared~to the original problem of distinguishing  $\Dyes$ and $\Dno$.  \red{In particular, we give a reduction from an even simpler algorithmic task called Set-Size-Element-Queries (\SSEQ\ for short) to \SSSQ ~(see Section~\ref{task2def}) and the query complexity lower bound for \SSSQ
  ~follows directly from the lower bound for \SSEQ ~presented in Section \ref{task2proof}.
  
Let us give a high-level description of the \SSEQ~task to provide some intuition for how we prove a query lower bound on it.
Roughly speaking, in this task an oracle holds an unknown and random subset $\bA$ of $[m]$ (here $m=\Theta(n)$) which is either ``small'' (size roughly $m/2$) or ``large'' (size roughly $m/2 + \Theta(\sqrt{n} \cdot \log n)$), and the task is to determine whether $\bA$ is small or large.  The algorithm may repeatedly query the oracle by providing it, at the $j$-th query, with an element $i_j \in [m]$; if $i_j \notin \bA$ then the oracle responds ``0'' with probability 1, and if $i_j \in \bA$ then the oracle responds ``1'' with probability $\eps/\sqrt{n}$ and  ``0'' otherwise.  Intuitively, the only way for an algorithm to determine that the unknown set $\bA$ is (say) large, is to determine that the fraction of elements of $[m]$ that belong to $\bA$ is $1/2 +  {\Theta(\log n/\sqrt{n})}$ rather than $1/2$; this in turn intuitively requires sampling $\Omega(n/\log^2 n)$ many random elements of $[m]$ and for each one ascertaining with high confidence whether or not it belongs to $\bA$.  But the nature of the oracle access described above for \SSEQ\ is such that for any given $i \in [m]$, at least $\Omega(\sqrt{n}/\eps)$ many repeated queries to the oracle on input $i$ are required in order to reach even a modest level of confidence as to whether or not $i \in \bA.$}  As alluded to earlier, the formal argument establishing our lower bound on the query complexity of \SSEQ~relies on an upper bound on the total variation distance between two Binomial distributions.
  
\ignore{  %we are able to prove a lower bound on the query complexity of this task .
  }

\subsection{Organization and Notation}

We start with the definitions of $\Dyes$ and $\Dno$ 
  as well as proofs of their properties in Section \ref{sec:distributions}.
We then introduce \SSSQ\ in Section~\ref{sec:simple-task},
  and give a reduction from \SSSQ\ to the problem of distinguishing $\Dyes$ and $\Dno$
  in Section \ref{sec:reduction}.
More formally, we show that any non-adaptive deterministic algorithm that
  distinguishes $\Dyes$ and $\Dno$ 
  can be used to solve \SSSQ\ with only an $O(\log n)$ factor loss in the query complexity.
% and state (without proof) a lower bound on \SSSQ. We then define the two distributions $\Dyes$ and $\Dno$ and %prove
%  the necessary properties in Section~\ref{sec:distributions}. The reduction from \SSSQ ~to the problem of %distinguishing $\Dyes$ and $\Dno$ is presented in Section~\ref{sec:reduction}, where
%we describe how a non-adaptive deterministic algorithm that
%  distinguishes $\Dyes$ and $\Dno$ with query complexity $q$
%  can be used to solve \SSSQ\ with only an $O(\log n)$ factor loss in the query complexity. Finally, in Section~\ref{sec:proof-sssq}, we prove the lower bound on \SSSQ\ stated in Section~\ref{sec:simple-task}.
Finally, we prove in Section~\ref{sec:proof-sssq} a lower bound for the query complexity of \SSSQ.
Theorem \ref{thm:main} then follows by combining this lower bound with  the reduction in Section~\ref{sec:reduction}.%} %, we are able to deduce Theorem~\ref{thm:main}.

We use boldfaced letters such as $\boldf,\bA,\bS$ to denote random variables. Given a string
   $x \in \zo^n$ and $\ell \in [n]$, we write $x^{(\ell)}$ to denote the string obtained from $x$ by flipping the $\ell$-th coordinate.
An \emph{edge} along the $\ell$th direction in $\{0,1\}^n$ is a pair $(x,y)$ of strings
  with $y=x^{(\ell)}$.
We say an edge $(x,y)$ is \emph{bichromatic with respect to a function $f$} (or simply $f$-bichromatic) if $f(x) \ne f(y).$
Given $x\in \{0,1\}^n$ and $S\subseteq [n]$, we use $x_{|S}\in \{0,1\}^S$ to denote the projection
  of $x$ on $S$.
   %\xnote{I commented out Bernstein's inequality since both appearances can be replaced by Chernoff bound.}

%\begin{lemma}[Bernstein's Inequality]
%Consider a set of $n$ independent random variables $\bX_1, \dots, \bX_n$, where $-1 \leq \bX_i \leq 1$ for all $i \in [n]$. Let $\bX = \sum_{i=1}^n \bX_i$. Then for all $0 < t \leq \Var[\bX]$, we have
%\[ \Pr\big[|X - \E[X]| \geq t\big] \leq 2e^{-\frac{t^2}{4\Var[\bX]}}. \]
%\end{lemma}

% !TEX root =  main.tex

\def\bU{\mathbf{U}}

\section{The $\Dyes$ and $\Dno$ distributions}
\label{sec:distributions}

Let $\alpha\in (0.5,1)$ be an absolute constant.
Let $n$ be a sufficiently large integer, with $k=\alpha n$, and let 
  $\eps$ be the distance parameter that satisfies
%Let
%  $k(n)=\alpha n$.
%There is a positive constant $c$ such that
%Let $k=k(n):\mathbb{N}\rightarrow \mathbb{N}$ and $\eps=\eps(n):
%  \mathbb{N}\rightarrow \mathbb{R}_{>0}$ be two functions that satisfy $k(n)=\alpha n$ and
%$ k(n)\le \alpha n$ and $2^{-n}\le \eps(n)\le 1/240$.
%When $\epsilon(n)$ satisfies
\begin{equation}\label{epsbound2}
2^{-(2\alpha-1)n/{2}}\le \eps\le 1/6.
\end{equation}
%for sufficiently large $n$.
%Then any non-adaptive $\eps$-tester for $k$-juntas
  %which tests whether an unknown Boolean function $f$ is a $k$-junta or is
%$\eps$-far from being a $k$-junta
%  must use
%  $\widetilde{\Omega}(n^{3/2}/\eps)$ queries.
%For any fixed $n > 0$ and $\eps \in (2^{-n/4}, 1/10)$,
In this section we describe a pair of probability distributions $\Dyes$ and $\Dno$ supported over Boolean functions $f \colon$
$ \{0, 1\}^n \to \{0, 1\}$.
%For $k = \frac{3n}{4} + \sqrt{n \log n}$,
We then show that $\boldf\leftarrow\Dyes$ is a $k$-junta with probability $1 - o(1)$, and that $\boldf \leftarrow \Dno$ is $\eps$-far from being a $k$-junta with probability $1 - o(1)$.

%\subsection{The $\Dyes$ and $\Dno$ distributions}
We start with some parameters settings.
%Recall that we choose $n$ to be a sufficiently large integer and let
%  $k=\alpha n$, 
Define
%\rnote{This is admittedly not gorgeous but it makes all the parameter settings pretty visible.  Previously $\delta$ and $N$ were defined in the text and the others were centered, which seemed a little weird.}
\begin{align*}
\delta &\eqdef 1-\alpha\in (0,0.5), &  p&\eqdef\frac{1}{2},  & q&\eqdef \frac{1}{2} + \frac{\log n}{\sqrt{n}},\\
m &\eqdef2\delta n + \delta \sqrt{n} \log n, & t &\eqdef n-m=(2\alpha-1)n- {\delta\sqrt{n}\log n}, & N &\eqdef 2^t.
\end{align*}

A function $\boldf \leftarrow \Dyes$ is drawn according to the following randomized procedure:
\begin{flushleft}\begin{enumerate}
\item Sample a random subset $\bM \subset [n]$ of size $t$.
Let $\bGamma = \Gamma_\bM:\{0,1\}^n\rightarrow [N]$ be the function that maps $x\in \{0,1\}^n$
  to the integer encoded by $x_{|\bM}$ in binary plus one.
Note that $|\overline{\bM}|=n-t=m$.
%we have $$|\overline{\bM}|=(2-2\alpha)n+{\delta\sqrt{n}\log n} = m.$$

%$\bM$ is used to define the multiplexer map $\bGamma = \bGamma_{\bM} \colon \{0, 1\}^n \to N$ which multiplexes according to one of the $N$ possible assignments to variables in $\bM$.
\item Sample an $\bA \subseteq \obM$ %according to distribution $\Ayes(\obM)$,
  %i.e., we 
  by including each element of $\obM$ in $\bA$ independently with probability $p$.
\item Sample independently a sequence of $N$ random subsets $\bS=(\bS_i\colon i\in [N])$ of $\bA$ as follows: for each $i\in [N]$,
  each element of $\bA$ is included in $\bS_i$ independently with probability ${\eps}/\sqrt{n}$.
Next we sample a sequence of $N$ functions $\bH = (\bh_i \colon i \in [N])$,
   by letting $\bh_i \colon \{0, 1\}^n \to \{0, 1\}$ be a random function over the coordinates in $\bS_i$,
   i.e., we sample an unbiased bit $\bz_i(b)$ for each string $b\in \{0,1\}^{\bS_i}$ independently and set
  $\bh_i(x)=\bz_i({x_{|\bS_i}}).$
%    \subset \bA$ chosen by including each $k \in \bA$ with probability $\frac{\eps}{\sqrt{n}}$.
\item Finally, $\boldf = \boldf_{\bM, \bA, \bH} \colon \{0, 1\}^n \to \{0, 1\}$ is defined using 
  $\bM,\bA$ and $\bH$ as follows:
\[ \boldf(x) = \bh_{\Gamma_{\bM}(x)}(x),\quad\text{for each $x\in \{0,1\}^n$.} \]

In words, an input $x$ is assigned the value $\boldf(x)$ as follows:  according to the coordinates of $x$ in the set $\bM$ (which intuitively should be thought of as unknown), one of the $N$ functions $\bh_i$ (each of which is, intuitively, a random function over an unknown subset $\bS_i$ of coordinates) is selected and evaluated on $x$'s coordinates in $\bS_i$.
For intuition, we note that both $\bM$ and $\obM$ will always be of size $\Theta(n)$,  
the size of $\bA$ will almost always be $\Theta(n)$, 
and for a given $i \in [N]$ the expected size of $\bS_i$ will typically be $\Theta(\eps\sqrt{n})$ (though the size of $\bS_i$ may not be as highly concßentrated 
  as the other sets when $\eps$ is tiny).
  %\footnote{For the case of $\alpha = 3/{4}$, we will have $|\bM|=  {n}/{2}$, and $|\bA| \approx \frac{n}{2} + \frac{\sqrt{n} \log n}{8}$.}
\end{enumerate}\end{flushleft}
%It remains to describe the distribution $\Ey(\oM)$ for any $(n/2)$-size subset $M \subset [n]$. In order to sample $\bA \leftarrow \Ey(\oM)$, we include coordinate $k \in \oM$ with probability $\py = \frac{1}{2}$.
A function $\boldf \leftarrow \Dno$ is generated using the same procedure except that $\bA$
  is a random subset of $\obM$ drawn by including each element of $\obM$ in $\bA$
  independently with probability $q$ (instead of $p$). See Figure~\ref{fig:example-x} for an example of how an input $x \in \{0, 1\}^n$ is evaluated by $\boldf \sim \Dyes$ or $\Dno$. 

\begin{figure}
\centering
\includegraphics[width=0.45\linewidth]{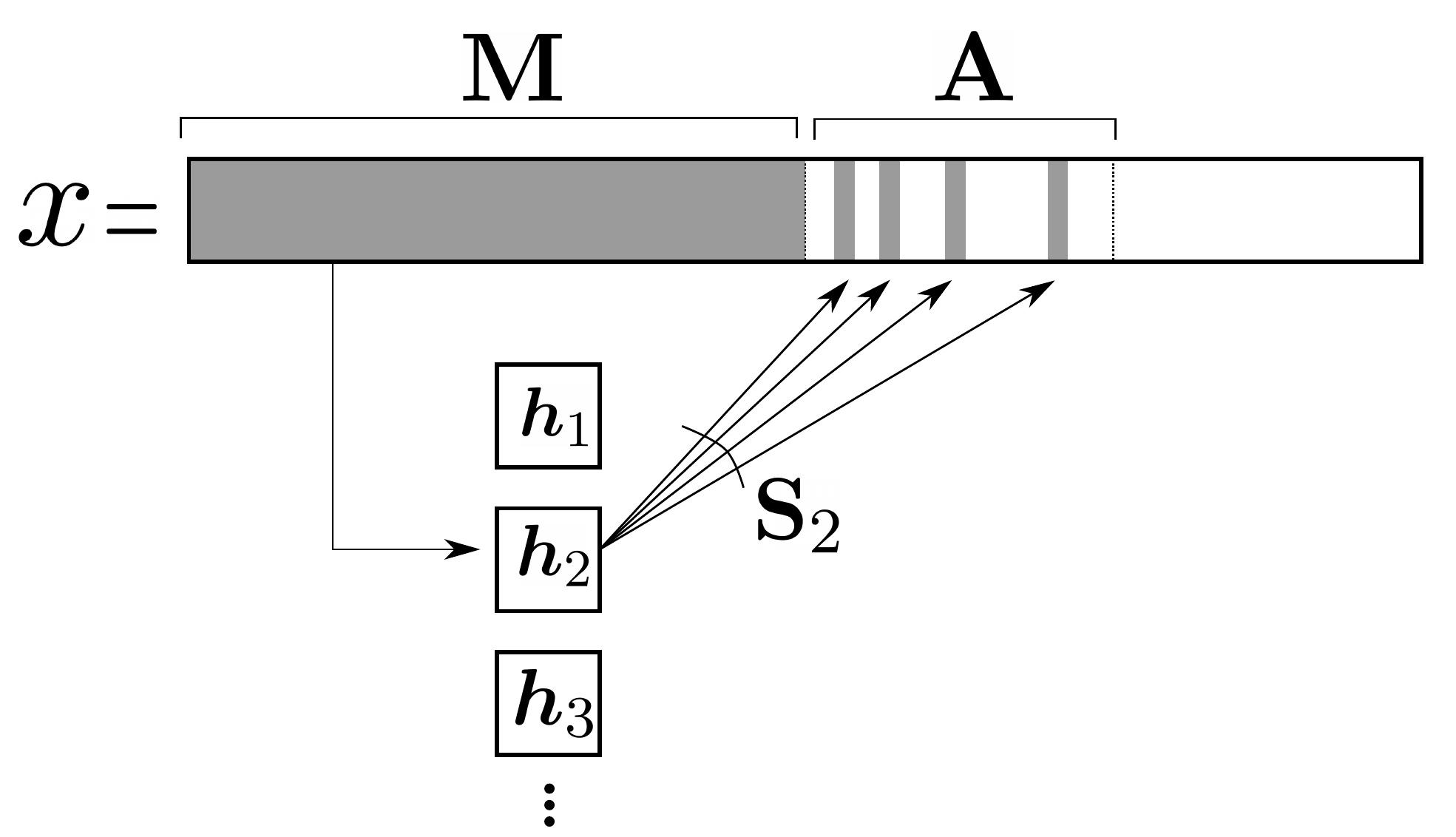}
\caption{An example of how an input $x \in \{0, 1\}^n$ is evaluated by $\boldf \sim \Dyes$ (or $\Dno$). The relevant variables of $x$ are shaded gray. All variables in $\bM$ index $x$ into $\bh_2$, which is a random function over the variables $\bS_2$, which are sampled from $\bA$ by including each with probability $ {\eps}/{\sqrt{n}}$. }
\label{fig:example-x}
\end{figure}
  
% \leftarrow \Ano(\obM)$,
%  where each element of $\obM$ is included in $\bA$ independently with
%  probability $\pn = (1/2) + ({\log n}/{\sqrt{n}})$.\footnote{It is not a coincidence that $\py$ and $\pn$ match the values specified in Section~\ref{sec:simple-task}.}

\subsection{Most functions drawn from $\Dyes$ are $k$-juntas}

We first prove that $\boldf \leftarrow\Dyes$ is a $k$-junta with probability $1-o(1)$.

\begin{lemma}
A function $\boldf \leftarrow \Dyes$ is a $k$-junta with probability $1 - o(1)$.
\end{lemma}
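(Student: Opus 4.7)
The plan is to identify the relevant variables of $\boldf$ directly from the construction and then apply a concentration inequality to bound their number. The key observation is that for $\boldf = \boldf_{\bM,\bA,\bH}$, the function value $\boldf(x) = \bh_{\Gamma_\bM(x)}(x)$ only reads the coordinates of $x$ lying in $\bM$ (to compute the index) or in $\bS_i \subseteq \bA$ (to evaluate $\bh_i$). Hence the set of relevant variables is always contained in $\bM \cup \bA$, so $\boldf$ is a $k$-junta whenever $|\bM| + |\bA| \le k$.

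Next I would plug in the parameter settings. Since $|\bM| = t = (2\alpha-1)n - \delta\sqrt{n}\log n$ and $k = \alpha n$, the inequality $|\bM| + |\bA| \le k$ is equivalent to
\[
|\bA| \;\le\; k - t \;=\; \delta n + \delta\sqrt{n}\log n \;=\; \tfrac{m}{2} + \tfrac{\delta\sqrt{n}\log n}{2}.
\]
Thus it suffices to show that $|\bA|$ does not exceed its mean by more than $\tfrac{\delta\sqrt{n}\log n}{2}$.

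Conditioned on $\bM$ (which has $|\overline{\bM}| = m = \Theta(n)$ always), $|\bA|$ is distributed as $\Bin(m, 1/2)$ with mean $m/2$. A standard Chernoff/Hoeffding bound gives
\[
\Pr\!\left[\, |\bA| - \tfrac{m}{2} \,>\, \tfrac{\delta\sqrt{n}\log n}{2}\,\Big|\,\bM\right]
\;\le\; \exp\!\left(-\Omega\!\left(\tfrac{(\sqrt{n}\log n)^2}{m}\right)\right)
\;=\; \exp(-\Omega(\log^2 n)),
\]
which is $o(1)$ and, crucially, does not depend on $\bM$. Integrating over $\bM$ therefore yields that $|\bM|+|\bA| \le k$ with probability $1-o(1)$, completing the proof.

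There is no real obstacle here; the construction was calibrated precisely so that the mean number of relevant coordinates is $k - \Theta(\sqrt{n}\log n)$, comfortably more than a standard deviation ($\Theta(\sqrt{n})$) below $k$, so a single Chernoff application suffices. The only thing to double-check is the arithmetic identity $k - t = m/2 + \tfrac{\delta\sqrt{n}\log n}{2}$, which follows directly from $\delta = 1-\alpha$, $t = n-m$, and the definition of $m$.
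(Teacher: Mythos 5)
Your proposal is correct and follows essentially the same route as the paper: observe that the relevant variables lie in $\bM\cup\bA$, note $|\bA|\sim\Bin(m,1/2)$ has mean $m/2=\delta n+\delta\sqrt{n}\log n/2$, and apply a Chernoff bound to show the deviation stays below $\delta\sqrt{n}\log n/2$ so that $|\bM\cup\bA|\le k$ with probability $1-o(1)$. Your arithmetic checks out and matches the paper's.
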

\begin{proof}
By the definition of $\Dyes$, all the relevant variables of 
  $\boldf\sim \Dyes$ belong to $\bM \cup \bA$. Note that $|\bM| = t$.
On the other hand, %=(2\alpha-1)n-\delta\sqrt{n}\log n/2
  the expected size of $\bA$ is $\delta n + { {\delta \sqrt{n} \log n}/ 2}$.
By a Chernoff bound,  
 $$|\bA| \leq \delta n + \frac{\delta\sqrt{n}\log n}{2}+\frac{\delta \sqrt{n} \log n}{4} < 
\delta n + \delta \sqrt{n} \log n
 $$with probability $1 - o(1)$.  When this happens we have $|\bM \cup \bA|< \alpha n = k$.
 %and hence $\boldf\leftarrow \Dyes$ is a $k$-junta with probability at least $1 - o(1)$.
\end{proof}

\subsection{Most functions drawn from $\Dno$ are $\eps$-far from $k$-juntas}

Next we prove that $\boldf\leftarrow\Dno$ is $\eps$-far from any $k$-junta with probability $1-o(1)$.
The details of the argument are somewhat technical so we start by giving some high-level intuition, which is relatively simple. Since $q=p + \log(n)/\sqrt{n}$, a typical outcome of $\bA$ drawn from 
  $\Dno$ is slightly larger than a typical outcome drawn from $\Dyes$,
% \leftarrow
%\Ano$ is $\Theta(\sqrt{n} \log n)$ larger than a typical outcome of $\bA \leftarrow \Ayes$, 
  and this difference causes almost every outcome of $|\bM \cup \bA|$ in $\Dno$ 
  (with $\bM\cup \bA$ being the set of relevant variables for $\boldf \leftarrow \Dno$) to be  larger than $k$
  by at least ${9}\sqrt{n}$.  
As a result, the relevant variables of any $k$-junta must miss either (a) at least one variable from $\bM$, or (b) at least ${9}\sqrt{n}$ variables from $\bA$.  Missing even a single variable from $\bM$ causes the $k$-junta to be far from $\boldf$ (this is made precise in Claim~\ref{ofofof} below).  On the other hand, missing ${9}\sqrt{n}$ variables from $\bA$ means that with probability at least $\Omega(\eps)$, at least one variable is missing from a typical~$\bS_i$ (recall that these are random $(\eps/\sqrt{n})$-dense subsets of $\bA$).  Because $\bh_i$ is a random function over~the variables in $\bS_i$, missing even a single variable would lead to a constant fraction of error when $\bh_i$ is the function determining the output of $\boldf$.

\begin{lemma}\label{main2222}
A function $\boldf\leftarrow\Dno$ is $\eps$-far from being a $k$-junta with probability $1 - o(1)$.
\end{lemma}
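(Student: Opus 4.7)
The plan is to follow the two-case analysis sketched in the excerpt: first condition on a ``typical'' outcome of $(\bM,\bA)$ on which $|\bA|$ comfortably exceeds $k-t$, then split any candidate $k$-junta $g$ with variable set $R$ according to whether $R\supseteq\bM$. Since $\mathbb{E}_{\Dno}[|\bA|] = mq = \delta n + \tfrac{5}{2}\delta\sqrt{n}\log n + o(\sqrt{n}\log n)$ while $k-t = \delta n + \delta\sqrt{n}\log n$, a routine Chernoff bound on $|\bA|$ (a sum of $m$ independent $\mathrm{Bernoulli}(q)$'s with standard deviation $O(\sqrt{n})$) gives $|\bA|\geq (k-t)+9\sqrt{n}$ with probability $1-o(1)$. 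I would condition on this event throughout.

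Case (a), where $R\not\supseteq\bM$, is discharged by invoking Claim~\ref{ofofof}. Case (b), where $\bM\subseteq R$, forces $|\bA\setminus R|\geq 9\sqrt{n}$. Writing $R' \eqdef R\setminus\bM\subseteq\obM$, the crucial observation is that because $\bM\subseteq R$, the junta $g$ decomposes across the $N=2^t$ slices $\{\Gamma_\bM(x)=i\}$ into $N$ independently choosable sub-juntas $g|_i$ on $R'$, so that
\[ \min_{g\text{ junta on }R}\Pr_x\bigl[\boldf(x)\neq g(x)\bigr] \;=\; \frac{1}{N}\sum_{i=1}^N \mathrm{opt}_i, \]
where $\mathrm{opt}_i$ denotes the Hamming distance from the random function $\bh_i$ on $\bS_i$ to the closest function on $R'$.

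Next I would lower bound each $\mathbb{E}[\mathrm{opt}_i]$. Conditional on $\bS_i$, the minimizing sub-junta is the coordinate-wise majority of $\bh_i$ over $\bS_i\setminus R'$; a short calculation shows $\mathbb{E}_{\bh_i}[\mathrm{opt}_i\mid\bS_i]\geq 1/4$ whenever $\bS_i\not\subseteq R'$ (when $|\bS_i\setminus R'|=1$ the majority is $1/2$-far from a truly random bit with probability $1/2$, giving expectation $1/4$; larger $|\bS_i\setminus R'|$ only helps). Since each element of $\bA$ lies in $\bS_i$ independently with probability $\eps/\sqrt{n}$ and $|\bA\setminus R|\geq 9\sqrt{n}$,
\[ \Pr\bigl[\bS_i\cap(\bA\setminus R)\neq\emptyset\bigr] \;\geq\; 1-(1-\eps/\sqrt{n})^{9\sqrt{n}} \;\geq\; 1-e^{-9\eps}, \]
so $\mathbb{E}[\mathrm{opt}_i]\geq \tfrac{1}{4}(1-e^{-9\eps})$, which a direct comparison shows exceeds $(1+\Omega(1))\eps$ on the entire range $\eps\in(0,1/6]$ (this is precisely why the constant $9$ is chosen). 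Since the $\mathrm{opt}_i$'s are independent given $(\bM,\bA)$, lie in $[0,1/2]$, and satisfy $\mathrm{Var}[\mathrm{opt}_i]\leq\tfrac{1}{2}\mathbb{E}[\mathrm{opt}_i]=O(\eps)$, Bernstein's inequality gives
\[ \Pr\Bigl[\tfrac{1}{N}\sum_i\mathrm{opt}_i<\eps\Bigr] \;\leq\; \exp\bigl(-\Omega(\eps N)\bigr). \]
I would then union-bound over the $\binom{m}{k-t}\leq 2^m=2^{2\delta n(1+o(1))}$ choices of $R\supseteq\bM$ with $|R|=k$.

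The hard part is making the per-$R$ concentration strong enough to survive this $2^{\Theta(n)}$-sized union bound across the entire allowed range of $\eps$, especially near the lower endpoint $\eps = 2^{-(2\alpha-1)n/2}$. A naive Hoeffding bound would give only $\exp(-\Omega(\eps^2 N))$, which degrades to $2^{-o(n)}$ in that regime and is insufficient. The fix is to exploit the small-variance structure of $\mathrm{opt}_i$ (it vanishes on all but an $\Omega(\eps)$-fraction of slices and is bounded by $1/2$, so $\mathrm{Var}[\mathrm{opt}_i]=O(\eps)$ rather than $\Omega(1)$) and apply Bernstein to get $\exp(-\Omega(\eps N))$. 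Under the hypothesis $\eps\geq 2^{-(2\alpha-1)n/2}$ we have $\eps N \geq 2^{(2\alpha-1)n/2 - \delta\sqrt{n}\log n}=2^{\Omega(n)}$, which comfortably dominates the $2^{2\delta n}$ union-bound cost. Combining the Case-(a) bound from Claim~\ref{ofofof}, the Case-(b) union bound above, and the $o(1)$ probability that $|\bA|$ is atypically small yields the lemma.
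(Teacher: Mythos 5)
Your proof is correct, and its skeleton matches the paper's: condition on $|\bA|$ exceeding $k-t$ by at least $9\sqrt{n}$, dispose of juntas missing a variable of $M$ via Claim~\ref{ofofof}, and in the main case exploit the fact that each of the $N$ independent slices has probability at least $1-(1-\eps/\sqrt{n})^{9\sqrt{n}}\ge 4.5\eps$ of having $\bS_i$ hit a missed coordinate. Where you genuinely diverge is in how the main case is executed. The paper does not union over the $2^{\Theta(n)}$ candidate junta sets $R$; instead it unions over the $\binom{n}{9\sqrt{n}}=2^{O(\sqrt{n}\log n)}$ possible sets $V$ of $9\sqrt{n}$ \emph{missed} coordinates, and for each fixed $V$ it lower-bounds the distance to \emph{every} function independent of $V$ at once by counting vertex-disjoint $\boldf$-bichromatic edges along a single direction $\rho(i)\in V$ per slice; concentration is obtained in two steps (a multiplicative Chernoff bound for $|\bI|\ge 4.4\eps N$, then Hoeffding on the fine-grained edge counts $\bD_{i,b}$, whose individual ranges $2^{m-|S_i|}$ are small relative to the total). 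You instead compute the exact per-slice optimal approximation error $\mathrm{opt}_i$ (using that a junta containing $\bM$ decomposes across slices, and that the best $R'$-measurable approximation of a random function incurs expected error $\ge 1/4$ per missed coordinate pattern), and you pay the full $2^{\Theta(n)}$ union bound over $R$, which forces the one-shot Bernstein bound $\exp(-\Omega(\eps N))$ in place of Hoeffding's $\exp(-\Omega(\eps^2 N))$ — you correctly identify that the latter would fail near $\eps=2^{-(2\alpha-1)n/2}$. Both routes end with failure probability $\exp(-2^{\Omega(n)})$, which crushes either union bound; yours is more direct about \emph{why} the junta fails (it cannot capture the randomness of $\bh_i$ on the missed coordinates), while the paper's bichromatic-edge formulation lets it quantify over all far functions with a much smaller union bound. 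One small slip that does not affect the conclusion: $\mathrm{Var}[\mathrm{opt}_i]\le\tfrac12\E[\mathrm{opt}_i]$ is fine, but $\E[\mathrm{opt}_i]$ need not be $O(\eps)$ (when $\eps$ is constant and $|\bA\setminus R|=\Theta(n)$ it can be $\Theta(1)$); Bernstein still yields $\exp(-\Omega(N\E[\mathrm{opt}_i]))\le\exp(-\Omega(\eps N))$ since a larger mean only helps.
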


\begin{proof}
Fix any subset $M\subset [n]$ of size $t$, and we consider $\boldf = \boldf_{M, \bA, \bH}$ where $\bA $ and $\bH$ are sampled according to the procedure for $\Dno$. With probability $1 - o(1)$ over the choice of $\bA$,  we have
\begin{equation}\label{hehebb}
|\bA| \geq
  q\hspace{0.02cm}m-\frac{\delta \sqrt{n}\log n}{2}
  \ge \delta n+ {2\delta\sqrt{n}\log n}\quad\text{and}\quad
|\bM\cup \bA|\ge k+\delta\sqrt{n}\log n.
\end{equation}
%\frac{n}{4} + \sqrt{n} \log n - \sqrt{n \log n} \geq \frac{n}{4} + 2 \sqrt{n\log n}.$$
We assume this is the case for the rest of the proof and fix any such set $A \subset \oM$.
It suffices to show that $\boldf=\boldf_{M,A,\bH}$ is $\eps$-far from $k$-juntas with probability
  $1-o(1)$, where $\bH$ is sampled according to the rest (steps 3 and 4) of the procedure for $\Dno$
  (by  sampling $\bS_i$ from $A$ and then $\bh_i$ over $\bS_i$).
%  (where we draw $\bS$ from $A$ and then a random $\bh_i$ over each $\bS_i$).
  % of size $\frac{n}{4} + 2\sqrt{n \log n}$, and we will
%  and we subsequently consider a random function $\boldf = \boldf_{M, A, \bH}$.

The plan for the rest of the proof is the following. For each $V \subset M \cup A$ of size ${9\sqrt{n}}$,
  %(which should be viewed as a set of irrelevant variables of a $k$-junta; from
 % (\ref{hehebb}) we know such a set exists for any $k$-junta since $\delta\sqrt{n}\log n\gg 9\sqrt{n}$) 
we use 
$\bE_V$ to denote the size of the \emph{maximum} set of vertex-disjoint, $\boldf$-bichromatic edges along
directions in $V$ only.
We will prove the following claim:

\begin{claim}
\label{cl:V-avoids-M}
For each $V\subset M\cup A$ of size $9\sqrt{n}$,  
  we have $\bE_V \geq \eps\hspace{0.02cm}2^{n}$ with probability $1 - \exp(-2^{\Omega(n)})$.
  %$\boldf=\boldf_{M,A,\bH}$ has a set of $\Omega(\eps\cdot 2^n)$ disjoint and bichromatic edges
  %along variables in $V$.
\end{claim}

%, there exists a set $\bE_V$ consisting of \emph{disjoint, bi-chromatic edges} in directions $V$ with size
%$$
%|\bE_V|=\Omega(\eps \cdot 2^n),
%$$  with probability at least $1 - \exp(-2^{\Omega(n)})$ over the choice of $\bS$ and $\bH$.
Note that when $\bE_V\ge \eps\hspace{0.02cm}2^{n}$, we have $\dist(\boldf, g)\ge \eps$ for every function $g$
  that does not depend on any variable in $V$.
This~is~because, for~every $\boldf$-bichromatic edge $(x,x^{(\ell)})$  along a coordinate $\ell\in V$,
  we must have $\boldf(x) \neq \boldf(x^{(\ell)})$ since the edge is bichromatic but $g(x) = g(x^{(\ell)})$ as $g$ does not depend on the $\ell$th variable.
As a result, $\boldf$ must disagree  with $g$ on at least $\eps\hspace{0.02cm} 2^n$ many points.
  %Thus, with probability $1 - \exp(-2^{\Omega(n)})$, $\boldf$ is $\Omega(\eps)$-far from any function which does not depend on $V$, for any fixed set $V$ of size $\sqrt{n}$.

Assuming Claim \ref{cl:V-avoids-M} for now, we can apply a union bound over all 
$$
{|M \cup A| \choose 9\sqrt{n}} \leq 
\binom{n}{9\sqrt{n}} \leq 2^{O(\sqrt{n} \log n)}$$ possible choices of $V\subset M\cup A$ to conclude that with probability
  $1-o(1)$,
 $\boldf = \boldf_{M,A, \bH}$ is $\eps$-far from all functions that do not depend on
  at least $9\sqrt{n}$ variables in $M\cup A$.
By (\ref{hehebb}), this set includes all $k$-juntas.
%$k$-juntas with probability $1-o(1)$.
%where
%$$k'=k+\delta\sqrt{n}\log n - \sqrt{n}>k,$$
%except with probability at most
%$$o(1) + \exp(-2^{\Omega(n)})\cdot {n\choose\sqrt{n}} = o(1),$$ where the first term accounts for the
%  probability of $\bA$ being too small, and the second term follows from the union bound over all $V$. Since any %$k$-junta is also a $k'$-junta, 
This concludes the proof of the Lemma \ref{main2222} modulo the proof of Claim~\ref{cl:V-avoids-M}.
\end{proof}

In the rest of the section, we prove Claim \ref{cl:V-avoids-M}
  for a fixed subset $V \subset M \cup A$ of size $9\sqrt{n}$.
We start with the simpler case when $V\cap M$ is nonempty.

\begin{claim}\label{ofofof}
If $V\cap M\ne \emptyset$, then we have $\bE_V\ge 2^n/{5}$ with probability $1 - \exp(-2^{\Omega(n)})$.
%  $\boldf$ has a set of at least $2^n/36$ disjoint and bichromatic edges
%  along variables in $V$.
\end{claim}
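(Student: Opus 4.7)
\medskip

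\textbf{Proof plan for Claim \ref{ofofof}.} Fix any coordinate $\ell \in V \cap M$ and restrict attention to edges along direction $\ell$. These $2^{n-1}$ edges already form a perfect matching on $\{0,1\}^n$, so they are automatically vertex-disjoint; hence it suffices to lower bound the number of $\boldf$-bichromatic edges along direction~$\ell$ by $2^n/5$.

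The plan is to organize these edges by the value of $x_{|M}$. Partition $\{0,1\}^n$ into $N = 2^t$ blocks $B_1,\dots,B_N$, where $B_i = \Gamma_M^{-1}(i)$ has size $2^m$. Since $\ell \in M$, flipping coordinate $\ell$ takes a block $B_i$ to a distinct block $B_{i'}$, pairing the $N$ blocks into $N/2$ unordered pairs $\{B_i, B_{i'}\}$. For each such pair, let $\bX_{i,i'}$ be the number of $\boldf$-bichromatic edges along direction $\ell$ going between $B_i$ and $B_{i'}$; then $\bE_V \ge \sum_{\{i,i'\}} \bX_{i,i'}$. For an edge $(x, x^{(\ell)})$ with $x \in B_i$, $x^{(\ell)} \in B_{i'}$, since $\ell \in M$ is disjoint from $A \supseteq \bS_{i'}$ we have $(x^{(\ell)})_{|\bS_{i'}} = x_{|\bS_{i'}}$, so $\boldf(x) = \bz_i(x_{|\bS_i})$ and $\boldf(x^{(\ell)}) = \bz_{i'}(x_{|\bS_{i'}})$ are the values of two \emph{independent} unbiased random functions. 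Hence each such edge is bichromatic with probability exactly $1/2$, giving $\mathbb{E}[\bX_{i,i'}] = 2^{m-1}$ and $\mathbb{E}[\sum \bX_{i,i'}] = 2^{n-2}$.

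The key observation is that the random variables $\{\bX_{i,i'}\}$, indexed by the $N/2$ pairs, are \emph{mutually independent}: each $\bX_{i,i'}$ depends only on $(\bS_i, \bz_i, \bS_{i'}, \bz_{i'})$, and across distinct pairs these quadruples involve disjoint sets of independent random variables. Since $\bX_{i,i'} \in [0, 2^m]$, Hoeffding's inequality applied to these $N/2 = 2^{t-1}$ independent bounded variables gives
\[
\Pr\!\left[\sum_{\{i,i'\}} \bX_{i,i'} \le \tfrac{2^n}{5}\right]
\;\le\; \exp\!\left(-\frac{2\,(2^{n-2}-2^n/5)^2}{2^{t-1}\cdot (2^m)^2}\right)
\;=\; \exp(-2^{\Omega(t)})\;=\;\exp(-2^{\Omega(n)}),
\]
where in the last step we used $2^{2n}/(2^{t+2m}) = 2^{n-m} = 2^t$ and the fact that $t = (2\alpha-1)n - o(n) = \Theta(n)$.

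The only point requiring care is verifying the claimed independence and the $\mathrm{Pr}[\cdot]=1/2$ bichromaticity on a per-edge basis — specifically, that flipping a coordinate $\ell \in M$ moves us to a block with an independent random function and, crucially, does not alter the projection onto $\bS_{i'}\subseteq A \subseteq \overline{M}$. Once that is in place, the rest is a one-line Hoeffding bound over independent bounded summands, which easily delivers the $\exp(-2^{\Omega(n)})$ failure probability; I do not anticipate any further obstacle.
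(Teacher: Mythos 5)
Your proposal is correct and follows essentially the same route as the paper: both partition $\{0,1\}^n$ into the $2^{t-1}$ pairs of blocks obtained by flipping a fixed $\ell \in V\cap M$, observe that the per-pair bichromatic-edge counts are independent with mean $2^{m-1}$ (because $\Gamma(x)\neq\Gamma(x^{(\ell)})$ puts the two endpoints under independent random functions), and finish with a single Hoeffding bound over these $[0,2^m]$-bounded summands. No gaps.
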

\begin{proof}%[Proof of Claim \ref{cl:V-avoids-M}]
Fix an $\ell \in V \cap M$; we will argue that with probability $1 - \exp(-2^{\Omega(n)})$ there are at least~$2^n/5$
$\boldf$-bichromatic edges along direction $\ell$.  This suffices since such  edges are clearly vertex-disjoint.

Observe that since $\ell \in M$, every $x \in \{0, 1\}^n$ has $\Gamma(x) \neq \Gamma(x^{(\ell)})$. 
For each $b \in \{0, 1\}^{M}$,  let
$X_b$~be the set of $x \in \{0, 1\}^n$ with $x_{|S} = b$.
%be the set of strings in $\zo^n$ that match $b$ on the coordinates in $M.$
We partition $\{0,1\}^n$ into $2^{t-1}$ pairs 
  $X_b$ and $X_{b^{(\ell)}}$, where $b$ ranges over the $2^{t-1}$ strings in $\zo^M$ with $b_\ell=0.$
For each such pair, we use $\bD_b$
  to denote the number of $\boldf$-bichromatic edges between $X_b$ and $X_{b^{(\ell)}}$.
We are interested in lower bounding 
$
\sum_{b} \bD_b.
$

We will apply Hoeffding's inequality.  
For this purpose we note that  the $\bD_b$'s are independent~(since they depend on distinct
  $\bh_i$'s), always lie between $0$ and $2^{m}$, and each one has expectation $2^{m-1}$.
The latter is because each edge $(x,x^{(\ell)})$ 
  has $\boldf(x)$ and $\boldf(x^{(\ell)})$ drawn as two independent 
  random bits, which is the case since $\Gamma(x)\ne \Gamma(x^{(\ell)})$.
Thus, the expectation of $\sum_b \bD_b$ is $2^{n-2}$.
By Hoeffding's inequality, we have
$$
\Pr\left[ \hspace{0.03cm}\left|\sum \bD_b-2^{n-2}\right|\ge \frac{2^n}{20}
\hspace{0.03cm}\right]\le 2\cdot \exp\left(-\frac{2(2^n/20)^2}
{2^{t-1}\cdot 2^{2m}}\right)=\exp\left(-2^{\Omega(n)}\right)
$$  
since $t=\Omega(n)$.
This finishes the proof of the claim.
\end{proof}

Now we may assume that $V\subset A$ (and $|V|=9\sqrt{n}$). 
We use $\bI$ to denote the set of $i\in [N]$ such that 
  $\bS_i\cap V\ne \emptyset$. 
%\[ 
%\bI = \big\{ i \in [N] \colon \bS_i \cap V \neq \emptyset\big\}.
%\]
%In other words, $\bI_{\ell, V} \subset [N]$ is the set of indices of functions $\bh_i$ for which variable $\ell$ was chosen in $\bS_i$, and nothing else inside $V$ was chosen in $\bS_i$.
The following claim shows that $\bI$ is large with extremely high probability:
\begin{claim}
\label{cl:claim1}
%For each index $\ell \in V$, 
We have $|\bI| \ge {4.4\hspace{0.02cm}\eps\hspace{0.01cm}N}$ with probability   
  at least $1 - \exp(-2^{\Omega(n)})$ over the choice of $\bS$.
\end{claim}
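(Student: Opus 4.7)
The plan is to observe that $|\bI|$ is a sum of $N$ independent Bernoulli random variables and then apply a standard Chernoff bound. Since the sets $\bS_1,\ldots,\bS_N$ are drawn independently of one another in step 3 of the construction, the indicators $\mathbf{1}[\bS_i \cap V \ne \emptyset]$ for $i \in [N]$ are mutually independent, each with parameter
\[
p^* \;\eqdef\; 1 - \left(1 - \frac{\eps}{\sqrt{n}}\right)^{|V|} \;=\; 1 - \left(1 - \frac{\eps}{\sqrt{n}}\right)^{9\sqrt{n}}.
\]
The first step is therefore to give a clean lower bound on $p^*$. Using the elementary inequality $1 - x \le e^{-x}$, we get $p^* \ge 1 - e^{-9\eps}$. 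The key calculus fact to verify is that $1 - e^{-9\eps} \ge 4.5\,\eps$ throughout the range $\eps \in (0,1/6]$. For small $\eps$ this follows from the Taylor expansion $1-e^{-9\eps} \ge 9\eps - \tfrac{(9\eps)^2}{2}$, while for $\eps$ up to $1/6$ I would simply check the inequality at the boundary and at the unique critical point of $g(\eps)=1-e^{-9\eps}-4.5\eps$. This gives $p^* \ge 4.5\,\eps$ uniformly.

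Next, I would apply a multiplicative Chernoff bound to $|\bI| = \sum_{i\in[N]} \mathbf{1}[\bS_i \cap V \ne \emptyset]$, whose mean $\mu = N p^* \ge 4.5\,\eps N$ exceeds the threshold $4.4\,\eps N$ by a constant multiplicative slack. Concretely, writing $4.4\,\eps N \le (1 - 1/45)\,\mu$, the standard bound gives
\[
\Pr\!\big[\,|\bI| \le 4.4\,\eps N\,\big] \;\le\; \exp\!\Big(-\Omega(\mu)\Big) \;=\; \exp\!\Big(-\Omega(\eps N)\Big).
\]

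The final step is to verify that $\eps N = 2^{\Omega(n)}$, so that the Chernoff tail is genuinely of the form $\exp(-2^{\Omega(n)})$ as required. Using $t = (2\alpha-1)n - \delta\sqrt{n}\log n$ and the hypothesis $\eps \ge 2^{-(2\alpha-1)n/2}$ from \eqref{epsbound2}, we get
\[
\eps N \;\ge\; 2^{-(2\alpha-1)n/2} \cdot 2^{(2\alpha-1)n - \delta\sqrt{n}\log n} \;=\; 2^{(2\alpha-1)n/2 - o(n)} \;=\; 2^{\Omega(n)},
\]
since $\alpha > 1/2$ is an absolute constant. Plugging this back into the Chernoff bound proves Claim~\ref{cl:claim1}.

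The only mildly delicate point is the constant in $p^* \ge 4.5\,\eps$, which must be strictly larger than the target constant $4.4$ in order to get a constant multiplicative slack for Chernoff; I expect this to be tight precisely at $\eps=1/6$, so the calculus check at the boundary is where the choice of the constant $9\sqrt{n}$ (inherited from the size of $V$) and the distance range $\eps \le 1/6$ interact. Everything else reduces to routine independence, a Chernoff bound, and the parameter estimate above.
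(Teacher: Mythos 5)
Your proposal is correct and follows essentially the same route as the paper: independence of the indicators $\mathbf{1}[\bS_i\cap V\ne\emptyset]$, the bound $1-(1-\eps/\sqrt{n})^{9\sqrt{n}}\ge 1-e^{-9\eps}\ge 4.5\,\eps$, a multiplicative Chernoff bound with constant relative slack between $4.5$ and $4.4$, and the estimate $\eps N=2^{\Omega(n)}$ from the lower bound on $\eps$. The only cosmetic difference is in verifying $1-e^{-9\eps}\ge 4.5\,\eps$ on $(0,1/6]$: the paper invokes $1-x/2\ge e^{-x}$ for $x\in[0,1.5]$ with $x=9\eps$, whereas you propose a Taylor expansion plus an endpoint/critical-point check, which works equally well.
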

\begin{proof}
For each $i\in [N]$ we have (using $1-x\le e^{-x}$ for all $x$ 
  and $1-x/2\ge  e^{-x}$ for $x\in [0,1.5]$): 
\[ 
\Prx\big[ i\in \bI
\hspace{0.02cm}
\big] 
%\geq 
%\Prx\big[ |\bS_i \cap V | = 1 \hspace{0.02cm}]  
%= \sqrt{n}\cdot \frac{\eps_0}{\sqrt{n}}\cdot \left( 1 - \frac{\eps_0}{\sqrt{n}} \right)^{\sqrt{n} - 1}\ge  \eps_0 (1-\eps_0)\ge \frac{\eps_0}{2}, 
=1-\left(1-\frac{\eps}{\sqrt{n}}\right)^{9\sqrt{n}}
\ge 1-e^{-9\eps}\ge 4.5\hspace{0.02cm}\eps,
\]
since ${\eps}/{\sqrt{n}}$ is the probability of each element of $A$
  being included in $\bS_i$ and $\eps\le 1/6$ so $9\eps\le 1.5$.

Using $\eps\ge 2^{-(2\alpha-1)n/2}$ from (\ref{epsbound2}), we have $\E[\hspace{0.02cm}|\bI|\hspace{0.02cm}]\ge 
4.5\hspace{0.02cm}\eps \hspace{0.01cm}N =2^{\Omega(n)}$.
Since the $\bS_i$'s are~independent, a Chernoff bound implies that  $|\bI|\ge 4.4\hspace{0.02cm}\eps \hspace{0.01cm}N$ with probability $1-\exp(-2^{\Omega(n)})$.
  %$\left(1 - \frac{\eps}{\sqrt{n}} \right)^{\sqrt{n} - 1}$ is the probability that $\bS_i \cap (V \setminus \{ k \}) = \emptyset$.
\end{proof}

By Claim \ref{cl:claim1}, we fix $S_1,\ldots,S_N$ to be any sequence of 
  subsets of $A$ that satisfy $|I|\ge 4.4\hspace{0.02cm}\eps\hspace{0.01cm}N$ in~the rest of the proof, and it suffices to show that over the random choices 
  of $\bh_1,\ldots,\bh_N$ (where each $\bh_i$ is chosen to be a random function over $S_i$),~$\bE_V\ge \eps\hspace{0.02cm}2^n$ with probability at least $1-\exp(-2^{\Omega(n)})$.
%Since $\eps\ge 2^{-(2\alpha-1)n/3}$,
%  we may take a union bound over the $\sqrt{n}$ elements in $V$ to conclude that
%\[ \bI \eqdef \bigcup_{\ell \in V} \bI_{\ell, V} \]
%has size $|\bI| \geq \eps N/3$ with probability at least $1 - \sqrt{n}\exp(-\Omega(\eps N/\sqrt{n}))
%\ge 1-\exp(-2^{\Omega(n)})$.
%Assume $|\bI|\ge \eps_0 N/3$ below.

To this end we use $\rho(i)$ for each $i\in I$ to denote the 
  first coordinate of $S_i$ in $V$, and 
$Z_{i}$ to denote the set of $x \in \{0, 1\}^n$ with $\Gamma(x) = i$.
%\ \text{and}\  x_{\ell} = 0 \hspace{0.02cm}\big\}
%\quad\text{and}\quad Y_{i} = \big\{ x \in \{0, 1\}^n \colon \Gamma(x) = i\ \text{and}\  x_{\ell} = 1 \hspace{0.02cm}\big\}.
Note that the $Z_i$'s are disjoint.
We further~partition each~$Z_i$ into disjoint $Z_{i,b}$, $b\in \{0,1\}^{S_i}$,
  with $x\in Z_{i,b}$ iff $x \in Z_i$ and $x_{|S_i}=b$.
  %, and similarly partition $Y_i$ into $Y_{i,b}$, $b\in \{0,1\}^{\bS_i}$
  %with~$b_\ell=1$.
For each $i\in I$ and $b\in \{0,1\}^{S_i}$ with $b_{\rho(i)}=0$,
  we use $\bD_{i,b}$ to denote the number of $\boldf$-bichromatic edges  
  between $Z_{i,b}$ and $Z_{i,b^{(\rho(i))}}$ along the $\rho(i)$th direction.
It is clear that such edges, over all $i$ and $b$, are vertex-disjoint and thus,
\begin{equation}\label{pfpfpf}
\bE_V\ge \sum_{i\in I} \hspace{0.05cm}\sum_{\substack{b\in \{0,1\}^{S_i}\\ b_{\rho(i)}=0}} \bD_{i,b}.
\end{equation}
  
We will apply Hoeffding's inequality. 
Note that $\bD_{i,b}$ is $2^{m-|S_i|}$ with probability $1/2$, and $0$ with probability $1/2$.
Thus, the expectation of the RHS of (\ref{pfpfpf}) is 
$$
\sum_{i\in I} 2^{|S_i|-1}\cdot 2^{m-|S_i|-1}=|I|\cdot 2^{m-2}\ge 1.1\hspace{0.02cm}\eps\hspace{0.02cm}2^n,
$$ 
using $|I|\ge 4.4\hspace{0.02cm}\eps\hspace{0.01cm}N$.
Since all the $\bD_{i,b}$'s are independent, by Hoeffding's inequality we have
\begin{align*}
\Pr\Big[ \hspace{0.03cm}\left|\hspace{0.03cm}\text{RHS of (\ref{pfpfpf})}-|I|\cdot 2^{m-2} \right|\ge 
  0.01\hspace{0.02cm}|I|\cdot 2^{m-2}
\hspace{0.03cm}\Big]&\le2\cdot \exp\left(-\frac{2(0.01\hspace{0.02cm}|I|\cdot 2^{m-2})^2}{\sum_{i\in I} 2^{|S_i|-1}
\cdot 2^{2(m-|S_i|)}}\right)
\\[0.5ex]&\le \exp\left(-2^{\Omega(n)}\right),
\end{align*} 
since $|I|\ge \Omega(\eps\hspace{0.01cm}N)=2^{\Omega(n)}$.
%2\cdot \exp\left(-\frac{2(?)^2}
%{2^{t-1}\cdot 2^{2m}}\right)=\exp\left(-2^{\Omega(n)}\right).
When this does not happen, we have $\bE_V\ge 0.99\cdot |I|\cdot 2^{m-2}>\eps\hspace{0.01cm}2^n$.  

This concludes the proof of Claim \ref{cl:V-avoids-M}. \qed

\section{The Set-Size-Set-Queries (SSSQ) Problem}
\label{sec:simple-task}

%In the~rest of the paper we choose $n$ to be a sufficiently large integer.
%Let $\eps(n)$ be a function that satisfies (\ref{epsbound}) as in Theorem \ref{thm:main} and set $k=\alpha n$ and $\eps=\eps(n)$. Let $\eps_0=120\eps$, $\delta=1-\alpha\in (0,1/2)$ be a positive
%  constant and 
  %\footnote{When $\alpha = \frac{3}{4}$, $\delta = \frac{1}{4}$ and $m = \frac{n}{2} + \frac{\sqrt{n} \log n}{4}$.}

We first introduce the Set-Size-Set-Queries (\SSSQ\ for short) problem,
  which is an artificial problem that we use as a bridge to prove Theorem \ref{thm:main}.
We use the same parameters $p,q$ and $m$ from the definition of $\Dyes$ and $\Dno$, with
  $n$ being sufficiently large (so $m=\Omega(n)$ is sufficiently large as well).
%$$ m=2\delta n+{\delta\sqrt{n}\log n} .$$

%We also let
%\[ p \eqdef \frac{1}{2}\quad\text{and}\quad q \eqdef \frac{1}{2} + \frac{\log n}{\sqrt{n}}. \]

We start by defining  $\Ayes$ and $\Ano$, two distributions over subsets of $[m]$:
  $\bA\sim \Ayes$
  is drawn by independently including each element of $[m]$ with probability $p$ and
  $\bA\sim\Ano$ is drawn by independently including each element with probability $\pn$.  
In \SSSQ, the algorithm needs to determine whether an unknown $A\subseteq [m]$ is drawn from $\Ayes$ or $\Ano$.
%Both tasks we define below are to determine whether an unknown subset $A =\bA \subseteq [m]$ is drawn from $\Ayes$ or $\Ano.$
(For intuition, to see that this task is reasonable, we observe here that a straightforward Chernoff bound shows that almost every outcome of $\bA \sim \Ayes$ is larger than almost every outcome of $\bA \sim \Ano$
  by $\Omega(\sqrt{n}\log n)$.)

Let $A$ be a subset of $[m]$ which is hidden in an oracle.
An algorithm accesses $A$ (in order to tell whether it is drawn from $\Ayes$ or $\Ano$)
  by  interacting with the oracle in the following way:  each time it calls the oracle, it does so by sending a subset of $[m]$ to the oracle.  The oracle responds as follows:    for each $j$ in the subset, it returns a bit that is $0$ if $j\notin A$,
  and is $1$ with probability $\eps/\sqrt{n}$ and $0$ with probability $1-\eps/\sqrt{n}$ if $j\in A$.  The cost of such an oracle call is the size of the subset provided to the oracle.  
  
More formally, a deterministic and non-adaptive algorithm $\Alg=(g,T)$ for \SSSQ ~accesses the set
  $A$ hidden in the oracle by submitting  
  a list of queries $T=(T_1,\ldots, T_d)$, for some $d\ge 1$, where each $T_i \subseteq [m]$ is a set. 
  (Thus, we call each $T_i$ a \emph{set query}, as part of the name \SSSQ.)
%An algorithm may make the following kinds of queries:
\begin{flushleft}\begin{itemize}
\item  Given $T$, the oracle
  returns a list of random vectors $\bv=(\bv_1,\ldots,\bv_d)$, where $\bv_i \in \{0, 1\}^{T_i}$ and
  each bit $\bv_{i,j}$ is independently distributed as follows:  if $j\notin A$ then $\bv_{i,j}=0$,  and if $j \in A$ then
\begin{equation} \label{eq:itmfa}
\bv_{i,j} = \begin{cases}1 & \text{with probability $\eps /{\sqrt{n}}$} \\
						 0 & \text{with probability $1-(\eps /\sqrt{n})$}. \end{cases} 
\end{equation}
						 Note that the random vectors in $\bv$ depend on both $T$ and $A$.
\item Given $\bv=(\bv_1,\ldots,\bv_d)$, $\Alg$ returns (deterministically) the value of
  $g(\bv)\in \{\text{``yes''}, \text{``no''}\}$.
%We abuse the notation and use $\Alg$ to denote the map
%  from $\bv$ to $\{\text{``yes''}, \text{``no''}\}$.
\end{itemize}\end{flushleft}
The performance of $\Alg=(g,T)$ is measured by its
 \emph{query complexity} and its \emph{advantage}.
\begin{flushleft}\begin{itemize}
\item The query complexity of $\Alg$ is defined as $\sum_{i=1}^d |T_i|$, the total size of all the set queries.
On the other hand, the {advantage} of $\Alg$ is defined as
\[
\Prx_{\bA \sim \Ayes}\big[\Alg(\bA)= \text{``yes''}\big]
-
\Prx_{\bA \sim \Ano}\big[\Alg(\bA)=\text{``yes''}\big]
.
\]

\end{itemize}\end{flushleft}

%A (deterministic) non-adaptive algorithm submits a list of queries $T_1, \dots, T_k$, receives a list of responses $\bR(T_1, %\dots, T_k)$, and outputs either ``yes'' or ``no.''

%We say that
\begin{remark}\label{ob1}
In the definition above, $g$ is a deterministic map from all possible sequences of vectors returned by
  the oracle to ``yes'' or ``no.''
Considering only deterministic as opposed to randomized $g$ is without loss of generality since 
given any query sequence $T$, the highest possible advantage can always be
  achieved by a deterministic map $g$.
%A (deterministic) non-adaptive $q$-query algorithm thus corresponds to a fixed query sequence $T_1,\dots,T_k$ of complexity $q$. \footnote{It will be sufficient for us to consider deterministic algorithms: for one thing, given any randomized algorithm with a certain advantage, some fixed query sequence in its support must achieve at least as good an advantage.  Also, the actual argument for non-adaptive junta testing via Yao's minimax approach means that we only need to consider fixed deterministic query sequences.}
\end{remark}
%We say that

We prove the following lower bound for any 
  deterministic, non-adaptive $\Alg$ in  Section~\ref{sec:proof-sssq}.

\begin{lemma}\label{lem:model1lb}
Any deterministic, non-adaptive $\Alg$ for $\emph{\SSSQ}$ with  %making queries $T_1, \dots, T_k$ which performs Task 1 with probability at least
  advantage at least $2/3$ satisfies
$$\sum_{i=1}^d |T_i| \geq \frac{n^{3/2}}{\eps\cdot  \log^3 n \cdot \log^2(n/\eps)}.$$
\end{lemma}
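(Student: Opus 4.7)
The plan is to reduce SSSQ to a product-Bernoulli hypothesis-testing problem, and then bound the total variation distance of the two product distributions via Hellinger tensorization. First, for a non-adaptive algorithm a set query $T_i$ is information-theoretically equivalent to submitting $|T_i|$ singleton queries, one per $j \in T_i$, since the oracle bits $\bv_{i,j}$ in (\ref{eq:itmfa}) are mutually independent across pairs $(i,j)$. So I may assume every $T_i$ is a singleton; let $c_j := |\{i : T_i = \{j\}\}|$, so that $\sum_{j \in [m]} c_j$ equals the total query complexity $Q := \sum_{i=1}^d |T_i|$.

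Next, for each $j \in [m]$ I would pass to the sufficient statistic $\bZ_j := \mathbf{1}[\text{some oracle bit for element }j\text{ is }1]$. Conditioned on $\bA_j = 1$ the $c_j$ bits at $j$ are i.i.d.\ $\mathrm{Ber}(\eps/\sqrt{n})$, while conditioned on $\bA_j = 0$ they are all zero; so once $\bZ_j$ is revealed, no further information about $\bA_j$ remains in the observation. Unconditionally $\bZ_j \sim \mathrm{Ber}(\alpha_j \py)$ under $\Ayes$ and $\bZ_j \sim \mathrm{Ber}(\alpha_j \pn)$ under $\Ano$, where $\alpha_j := 1-(1-\eps/\sqrt{n})^{c_j}$, and the $\bZ_j$'s are mutually independent. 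The advantage of $\Alg$ is therefore upper bounded by the total variation distance $d_{\mathrm{TV}}(P,R)$ between the products $P := \prod_j \mathrm{Ber}(\alpha_j \py)$ and $R := \prod_j \mathrm{Ber}(\alpha_j \pn)$.

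Then I would bound $d_{\mathrm{TV}}(P,R) \leq H(P,R)$ and use the standard product bound $H^2(P,R) \leq \sum_j H^2(\mathrm{Ber}(\alpha_j \py), \mathrm{Ber}(\alpha_j \pn))$. Writing $\delta := \pn - \py = \log n/\sqrt{n}$, the identity $(\sqrt{u} - \sqrt{v})^2 = (u-v)^2/(\sqrt{u} + \sqrt{v})^2$ yields $H^2(\mathrm{Ber}(\alpha_j \py), \mathrm{Ber}(\alpha_j \pn)) = O(\alpha_j \delta^2)$ uniformly in $\alpha_j \in [0,1]$. Combined with $\alpha_j \leq c_j\eps/\sqrt{n}$ (from $1-(1-x)^k \leq kx$), this gives $H^2(P,R) = O(\delta^2 \cdot (\eps/\sqrt{n}) \cdot Q) = O(\eps \log^2 n / n^{3/2}) \cdot Q$, so forcing $d_{\mathrm{TV}}(P,R) \geq 2/3$ requires $Q = \Omega(n^{3/2}/(\eps \log^2 n))$, which is stronger than (and hence implies) the lemma's stated bound.

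The main technical point will be verifying $H^2(\mathrm{Ber}(\alpha_j \py), \mathrm{Ber}(\alpha_j \pn)) = O(\alpha_j \delta^2)$ in the regime $c_j \gg \sqrt{n}/\eps$ where $\alpha_j \to 1$ and both $\alpha_j \py$ and $\alpha_j \pn$ approach $1/2$; here $1-\alpha_j\pn \geq 1-\pn = \Omega(1)$ keeps the denominator $\sqrt{1-\alpha_j\py}+\sqrt{1-\alpha_j\pn}$ bounded below, so the Hellinger bound remains $\Theta(\delta^2) = \Theta(\alpha_j\delta^2)$. A more pedestrian route, closer to the paper's hint about total variation between two Binomial distributions, would be to pigeonhole the $c_j$'s into $O(\log n)$ buckets of roughly equal values, note that within each bucket $|\{j : \bZ_j = 1\}|$ is Binomially distributed, and bound $d_{\mathrm{TV}}(\mathrm{Bin}(m',\alpha\py), \mathrm{Bin}(m',\alpha\pn))$ directly on each bucket; the extra $\log n \cdot \log^2(n/\eps)$ factors in the stated bound are naturally incurred by such bucketing and by truncation of indices with very large $c_j$.
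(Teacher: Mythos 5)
Your proposal is correct, and it splits naturally into two halves that compare differently with the paper. The first half --- replacing each set query by singleton queries and passing to the per-element indicator $\bZ_j$ of ``at least one $1$ seen at element $j$'' --- is exactly the paper's reduction from \SSSQ\ to the intermediate problem \SSEQ\ (your sufficiency observation is the content of Claim~\ref{cl:same-dist}: conditioned on $\bZ_j$, the individual bits at $j$ have the same law under both hypotheses, so no information is lost). The second half is genuinely different. The paper rounds the multiplicities $c_j$ to powers of two, buckets the coordinates into $L+1=O(\log(n/\eps))$ classes, observes that the bucket counts are independent Binomials $\Bin(c_j,p\lambda_j)$ versus $\Bin(c_j,q\lambda_j)$, and controls each pair by a quantitative Binomial total-variation bound of Roos, paying the $\log n\cdot\log^2(n/\eps)$ factors for the bucketing and the union (subadditivity) over buckets. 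You instead tensorize squared Hellinger distance directly over the $m$ Bernoulli coordinates: your per-coordinate estimate $H^2(\mathrm{Ber}(\alpha_j p),\mathrm{Ber}(\alpha_j q))=O(\alpha_j\delta^2)$ checks out in both regimes (the $\sqrt{u}$-terms contribute $\alpha_j\delta^2/(\sqrt p+\sqrt q)^2$ exactly, and the $\sqrt{1-u}$-terms are $O(\alpha_j^2\delta^2)$ since $1-\alpha_j q\ge 1-q=\Omega(1)$), and together with $\sum_j\alpha_j\le (\eps/\sqrt n)\sum_j c_j$ it yields $Q=\Omega(n^{3/2}/(\eps\log^2 n))$, which is stronger than the bound stated in Lemma~\ref{lem:model1lb} by a $\log n\cdot\log^2(n/\eps)$ factor. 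The only point to be careful about is the normalization convention relating $\dtv$ and Hellinger distance, which affects only constants. In short: same reduction, different distance computation, and your route is both cleaner (no rounding or bucketing) and quantitatively sharper.
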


% !TEX root =  main.tex

\section{Reducing from SSSQ to distinguishing $\Dyes$ and $\Dno$}
\label{sec:reduction}

In this section we reduce from \SSSQ~to the problem of distinguishing the pair of distributions $\Dyes$ and $\Dno$.
More precisely, let $\Alg^*=(h,X)$ denote a deterministic and nonadaptive algorithm that makes
  $q\le (n/\eps)^2$ string queries\footnote{Any algorithm that makes more than this many queries 
  already fits the $\widetilde{\Omega}(n^{3/2}/\eps)$ lower bound we aim for.} $X=(x_1,\ldots,x_q)$ to a hidden function $f$ drawn from either $\Dyes$ or $\Dno$, applies the (deterministic) map $h$ to return
  $h(f(x_1),\ldots,f(x_q))\in \{\text{``yes''},\text{``no''}\}$, and satisfies
\begin{equation}\label{ueue}
\Prx_{\boldf \sim \Dyes}\big[ \Alg^*(\boldf) = \text{``yes''} \big] - \Prx_{\boldf \sim \Dno}\big[ \Alg^*(\boldf) = \text{``yes''}\big] \geq 3/4.
\end{equation}
We show how to define from $\Alg^*=(h,X)$ an algorithm $\Alg=(g,T)$ for the problem \SSSQ~with
  query complexity at most $\tau \cdot q$ and advantage $2/3$, where %for any 
  $
\tau=c_\alpha \cdot 5 \log(n/\eps)
$
 and $$c_\alpha=-\frac{1}{\log (1.5-\alpha)} >0
 %\xnote{Hopefully this fixed it; also see the red part on the 
 %next page. {\bf Rocco:}  Looks good. I changed it to a $\red{<}$, which we get because $\alpha > 0.5$ (and which we need I think because of the $o(1)$ on the LHS).}
 \quad\text{with}\quad \text{$(1.5-\alpha)^{c_\alpha}=1/2$}$$ is a constant that depends on $\alpha$. 
 %is any constant strictly greater than $\log \left( {\frac 1 {\red{2(1-\alpha)}}}\right)>0$.
Given this reduction it follows from Lemma \ref{lem:model1lb} that
$q\ge \widetilde{\Omega}(n^{3/2}/\eps).$
This finishes the proof of Theorem \ref{thm:main}.
%Theorem ? then follows directly.

\def\Eyes{\mathcal{E}_\text{yes}} \def\Eno{\mathcal{E}_\text{no}}

We start with some notation.
Recall that in both $\Dyes$ and $\Dno$, 
  $\bM$ is a subset of $[n]$ of size $t$ drawn uniformly at random.
For a fixed $M$ of size $t$, we use $\Eyes(M)$ to denote the distribution of
  $\bA$ and $\bH$ sampled in the randomized procedure for $\Dyes$, conditioning on $\bM=M$.
We define $\Eno(M)$ similarly.
Then conditioning on $\bM=M$, $\boldf\sim \Dyes$ 
  is distributed as $f_{M,\bA,\bH}$ with $(\bA,\bH)\sim \Eyes(M)$ and
  $\boldf\sim \Dno$ is distributed as $f_{M,\bA,\bH}$ with $(\bA,\bH)\sim \Eno(M)$. 
This allows us to rewrite (\ref{ueue}) as
\begin{equation*}
\frac{1}{{n\choose t}}\cdot \sum_{M : |M|=t} \left(
  \Prx_{(\bA,\bH) \sim \Eyes(M)}\big[ \Alg^*(\boldf_{M,\bA,\bH}) = \text{``yes''} \big] - \Prx_{(\bA,\bH) \sim \Eno(M)}\big[ \Alg^*(\boldf_{M,\bA,\bH}) = \text{``yes''}\big]\right) \geq \frac{3}{4}.
\end{equation*}
 % 
%, and obtain a lower bound for distinguishing the distributions over Boolean functions, $\Dyes$ and $\Dno$ given . 

We say $M\subset [n]$ is \emph{good} if 
  any two queries $x_i$ and $x_j$ in $X$ with Hamming distance $\|x_i-x_j\|_1 \geq \tau$  have different projections on $M$, i.e., $(x_i)_{|M} \neq (x_j)_{|M}$.
We prove below that most $M$'s are good.

\begin{claim}\label{easyeasy}
$\Pr_{\bM}\big[\hspace{0.02cm}\bM\text{~is not good}\hspace{0.05cm}\big]=o(1).$ %The fraction of bad sets $M\subset [n]$ of size $t$ is $o(1)$.
\end{claim}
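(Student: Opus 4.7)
The plan is a straightforward union bound over pairs of queries combined with a hypergeometric estimate. For a pair $(x_i, x_j)\in X\times X$ at Hamming distance $d = \|x_i-x_j\|_1 \ge \tau$, let $D\subseteq [n]$ be the set of $d$ coordinates on which $x_i$ and $x_j$ differ; the event $(x_i)_{|\bM} = (x_j)_{|\bM}$ is exactly $\bM \cap D = \emptyset$. I would begin by computing this probability directly, obtaining
\[
\Pr\big[\bM\cap D = \emptyset\big] \;=\; \frac{\binom{n-d}{t}}{\binom{n}{t}} \;=\; \prod_{i=0}^{d-1}\frac{m-i}{n-i} \;\le\; \Big(\frac{m}{n}\Big)^d \;\le\; \Big(\frac{m}{n}\Big)^\tau,
\]
where the first inequality uses $(m-i)/(n-i)\le m/n$ (since $m\le n$) and the second uses monotonicity in $d$.

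Next I would translate $(m/n)^\tau$ into an inverse polynomial in $n/\eps$ via the definition of $c_\alpha$. Observe that $m/n = 2\delta + \delta \log(n)/\sqrt{n} \to 2(1-\alpha)$; since $\alpha > 1/2$ implies $2(1-\alpha) < 1.5 - \alpha$, for all sufficiently large $n$ we have $m/n \le 1.5 - \alpha$. Invoking $(1.5-\alpha)^{c_\alpha} = 1/2$,
\[
\Big(\frac{m}{n}\Big)^\tau \;\le\; (1.5-\alpha)^{c_\alpha\cdot 5\log(n/\eps)} \;=\; 2^{-5\log(n/\eps)} \;=\; (n/\eps)^{-5}.
\]

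Finally, since $\bM$ fails to be good iff some pair $(x_i,x_j)\in X\times X$ at Hamming distance at least $\tau$ has matching projections on $\bM$, I would union bound over the at most $q^2 \le (n/\eps)^4$ such (ordered) pairs to conclude
\[
\Pr\big[\bM\text{ is not good}\big] \;\le\; q^2 \cdot (n/\eps)^{-5} \;\le\; (n/\eps)^{-1} \;=\; o(1),
\]
completing the proof. I don't foresee any genuine obstacle; the only mildly delicate spot is the comparison $2(1-\alpha) < 1.5-\alpha$, which is precisely why the paper defined $c_\alpha$ using base $(1.5-\alpha)$ rather than the slightly smaller asymptotic value of $m/n$---to ensure the base bound holds uniformly for all sufficiently large $n$.
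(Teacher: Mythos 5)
Your proposal is correct and follows essentially the same route as the paper: the same hypergeometric bound $\binom{n-d}{t}/\binom{n}{t}\le (m/n)^\tau \le (1.5-\alpha)^\tau = (n/\eps)^{-5}$, followed by a union bound over the at most $q^2\le (n/\eps)^4$ pairs. You are merely a bit more explicit about the monotonicity in $d$ and the comparison $2(1-\alpha)<1.5-\alpha$, both of which the paper leaves implicit.
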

\begin{proof}
For each pair of strings $x_i$ and $x_j$ in $X$ with Hamming distance at least
  $\tau\ignore{ = \red{8\log (n/\eps)}}$, the probability of them having
  the same projection on $\bM$ (drawn uniformly from all size-$t$ subsets) is at most
\begin{align*}
\frac{{n-\tau\choose t}}{{n\choose t}}&=\frac{(n-\tau-t+1)\cdots (n-t)}{(n-\tau+1)
\cdots n}%\\[0.4ex] &=\left(1-\frac{t-1}{n-t}\right)\cdots \left(1-\frac{t-1}{n-\tau+1}\right)
%\le \left(1-\frac{t-1}{n}\right)^\tau
\leq \left(1 - {\frac t n}\right)^{\tau}
\le \big(2(1-\alpha) + o(1)\big)^\tau <  \left(1.5-\alpha\right)^{\tau}
 \leq O\left(\frac{\eps}{n}\right)^5, 
\end{align*}
by our choices of $c_\alpha$ and $\tau$.
%In particular,
%\[ \Prx_{\bM} \left[ \exists x_i, x_j, \|x_i - x_j\| \geq 8\log (n/\eps), x_{i|\bM} = x_{j|\bM} \right] \leq q^2 \left(1 - \frac{8\log (n/\eps)}{n} \right)^{n/2} \leq q^2 (\eps/n)^{4} = o(1), \]
%since we assume that $q^2 \ll \frac{n^{4}}{\eps^4}$. 
The claim follows by a union bound over at most $q^2\le (n/\eps)^4$ pairs.
\end{proof}

We can split the sum (\ref{ueue}) into two sums:
  the sum over good $M$ and the sum over bad $M$.  By Claim \ref{easyeasy}
    the contribution from the bad $M$ is at most $o(1)$, and thus we have that
\begin{equation*}
\frac{1}{{n\choose t}}\cdot \sum_{\text{good}\ M} \left(
  \Prx_{(\bA,\bH) \sim \Eyes(M)}\big[ \Alg^*(\boldf_{M,\bA,\bH}) = \text{``yes''} \big] - \Prx_{(\bA,\bH) \sim \Eno(M)}\big[ \Alg^*(\boldf_{M,\bA,\bH}) = \text{``yes''}\big]\right) 
\end{equation*}
is at least $3/4 - o(1)$.
Thus, there must exist a good set $M \subset [n]$ of size $t$ with
\begin{align}
\label{eq:m-and-alg}
\Prx_{(\bA, \bH) \sim \Eyes(M)}\big[ \Alg^*(\boldf_{M, \bA, \bH}) = \text{``yes''}\big] - \Prx_{(\bA, \bH) \sim \Eno(M)}\left[ \Alg^*(\boldf_{M, \bA, \bH}) = \text{``yes''}\right] \geq 2/3. 
\end{align}
%where every pair of queries $x_i, x_j$ with $x_{i|M} = x_{j|M}$ have $\|x_i - x_j\| \leq 8 \log (n/\eps)$. 
Fix such a good $M$. We use $\Alg^*=(h,X)$ and $M$ to define an algorithm $\Alg=(g,T)$ for 
  \SSSQ~as follows (note that 
  the algorithm $\Alg$ below actually works over the universe $\overline{M}$ (of size $m$) 
  instead of $[m]$ as in the original definition of \SSSQ\ but this
  can be handled by picking any bijection between $\overline{M}$ and $[m]$;
  accordingly $\bA\sim \Ayes$~is drawn by including each element of $\overline{M}$ with probability $p$
  and $\bA\sim\Ano$ is drawn by including each element of $\overline{M}$ with probability $q$).
We start with $T$:
\begin{flushleft}\begin{enumerate}
\item First we use $M$ to define an equivalence relation $\sim$ over the query set $X$, where
  $x_i\sim  x_j$ if $(x_i)_{|M}=(x_j)_{|M}$. 
Let $X_1,\ldots,X_d$, $d\ge 1$, denote the equivalence classes of $X$,
  and let us write $\rho(\ell)$ for each $\ell\in [d]$ to denote the value $\Gamma(x)\in [N]$ that is shared by all strings $x\in X_\ell$. 
\item Next we define a sequence of subsets of $\overline{M}$, $T=(T_1,\ldots,T_d)$, as the set queries of
  $\Alg$, where
\begin{align}
\label{eq:def_t}
T_\ell &= \big\{ i \in \overline{M} \colon \exists\hspace{0.05cm} x,y \in X_\ell \text{~such that~} x_i \neq y_i \big\}. 
\end{align}
\end{enumerate}\end{flushleft}
To upper bound $|T_\ell|$, fixing an arbitrary string $x\in X_\ell$ and recalling that $M$ is good, we have that
$$
|T_\ell|\le \sum_{y\in X_\ell} \|x-y\|_1 \le \sum_{y \in X_\ell} \tau
=\tau\cdot |X_\ell| .
$$
As a result, the query complexity of $\Alg$ (using $T$ as its set queries) is at most
$$
\sum_{\ell=1}^d |T_\ell| \le \tau\cdot \sum_{\ell=1}^d|X_\ell|\le \tau\cdot q.
$$

It remains to define $h$ and then prove that the advantage of $\Alg=(g,T)$ for \SSSQ~is at least $2/3$.
Indeed the $g$ that we define is a randomized map and we describe it as a randomized procedure below
  (by Remark \ref{ob1} one can extract from $g$ a deterministic map that achieves 
  the same advantage):
\begin{flushleft}\begin{enumerate}
\item Given $v_1,\ldots,v_d$, $v_\ell\in \{0,1\}^{T_\ell}$, as the 
  strings returned by the oracle upon being given $T$,
let
\begin{equation}\label{fjfjf}
R_\ell=\big\{j\in T_\ell: v_{\ell,j}=1\big\}.
\end{equation}
%for each $\ell\in [d]$.
For each $\ell \in [d]$, the procedure draws a random function $\boldf_\ell:\{0,1\}^{R_\ell}\rightarrow\{0,1\}$,
by flipping $2^{|R_\ell|}$ many independent and unbiased random bits.

\item Next for each query $x\in X_\ell$, $\ell\in [d]$, 
  we feed $\boldf_\ell(x_{|R_\ell})$ to $h$ as the bit that the oracle returns upon
  the query $x$.
Finally the procedure returns the result (``yes'' or ``no'') that $h$ returns.
\end{enumerate}\end{flushleft}
In the rest of the proof we show that the advantage of $\Alg=(g,T)$ is exactly the same as the
  LHS of (\ref{eq:m-and-alg}) and thus, is at least $2/3$.
  
  For convenience, we use $\Vyes$ to denote the distribution of responses $\bv=(\bv_1,\ldots,\bv_d)$ to $T$ when $\bA \sim\Ayes$, and $\Vno$ to denote the distribution when $\bA \sim\Ano$.
Then the advantage of $\Alg$ is
\[
\Prx_{\bv \sim \Vyes}\big[\hspace{0.03cm}g(\bv)=\text{``yes''}\big]
-
\Prx_{\bv \sim \Vno}\big[\hspace{0.03cm}g(\bv)=\text{``yes''}\big]
.
\]
%is exactly the same as the LHS .
It suffices to show that \vspace{0.08cm}
\begin{align}\label{yescase}
\Prx_{\bv \sim \Vyes}\big[\hspace{0.03cm}g(\bv)=\text{``yes''}\big]&=\Prx_{(\bA, \bH) \sim \Eyes(M)}\big[ \Alg^*(\boldf_{M, \bA, \bH}) = \text{``yes''}\big] \quad\text{\ and} \\[0.2ex]
\Prx_{\bv \sim \Vno}\big[\hspace{0.03cm}g(\bv)=\text{``yes''}\big]&=
\Prx_{(\bA, \bH) \sim \Eno(M)}\left[ \Alg^*(\boldf_{M, \bA, \bH}) = \text{``yes''}\right].
\label{nocase}
\end{align}
We show (\ref{yescase}); the proof of (\ref{nocase}) is similar.
From the definition of $\Vyes$ and $\Eyes(M)$ the distribution~of $(\mathbf{R}_\ell:\ell\in [d])$
  derived from $\bv\sim \Vyes$ using (\ref{fjfjf}) is the same as the distribution of 
  $(\bS_{\rho(\ell)}\cap T_\ell:\ell\in [d])$:  both are sampled by first drawing a random subset $\bA$
  of $\smash{\overline{M}}$ and then drawing a random subset of $\bA\cap T_\ell$ independently by including each element of $\bA\cap T_\ell$ with the same probability ${\eps}\ignore{\eps_0}/\sqrt{n}$ (recall in particular equation (\ref{eq:itmfa}) and
  step~3 of the randomized procedure specifying $\Dyes$ in Section~\ref{sec:distributions}).
Since $\boldf_{M,\bA,\bH}(x)$ for $x\in X_\ell$ is determined by a random Boolean function $\bh_{\rho(\ell)}$
  from $\{0,1\}^{\bS_{\rho(\ell)}}$ to $\{0,1\}$,
  and since all the queries in $X_\ell$ only differ by coordinates in $T_\ell$,
  the distribution of the $q$ bits that $g$ feeds to $h$ when $\bv\sim \Vyes$ is the same
  as the distribution of $(\boldf(x):x\in X)$ when $\boldf\sim \Eyes(M)$.
This finishes the proof of (\ref{yescase}), and concludes our reduction argument.

\def\Ayes{\mathcal{A}_{\text{yes}}}
\def\Ano{\mathcal{A}_{\text{no}}}
\def\py{p}
\def\pn{q}
\def\Alg{\textsc{Alg}}
\def\Vyes{\mathcal{V}_{\text{yes}}}
\def\Vno{\mathcal{V}_{\text{no}}}
\def\Byes{\mathcal{B}_{\text{yes}}}
\def\Bno{\mathcal{B}_{\text{no}}}

\section{A lower bound on the non-adaptive query complexity of SSSQ}
\label{sec:proof-sssq}

%In this section we prove Lemma~\ref{lem:model1lb}. 

\ignore{
Let $\alpha\in (1/2,1)$ be a constant and $\eps(n)$ be a function that satisfies (\ref{epsbound}) as
  in Theorem \ref{thm:main}.
In the~rest of the paper we choose $n$ to be a sufficiently large integer and
  set $k=\alpha n$ and $\eps=\eps(n)$. Let $\kappa = 120$ and $\eps_0=\kappa\eps$. We also let $\delta=1-\alpha\in (0,1/2)$ be a positive
  constant and
$$\color{red} m=2\delta n+{\delta\sqrt{n}\log n} .$$
In this section we describe two algorithmic tasks, \SSSQ ~and \SSEQ, and prove lower bounds on the query complexity of these tasks. We will then show
  in Section \ref{sec:reduction} that a ``too-good-to-be-true'' non-adaptive deterministic algorithm for distinguishing
  $\Dyes$ and $\Dno$ (to be defined in Section \ref{sec:distributions})
  can be converted into a query-efficient deterministic algorithm for \SSSQ, thus contradicting the lower bounds.
%that a  non-adaptive $k$-junta tester can perform these algorithmic tasks contradicting the lower bounds.

For the rest of the paper, fix
\[ p \eqdef \frac{1}{2}\quad\text{and}\quad q \eqdef \frac{1}{2} + \frac{\log n}{\sqrt{n}}. \]

Let $\Ayes$ be the distribution over subsets of $[m]$ obtained by independently including each element with probability $p$, and let $\Ano$ be the distribution over subsets of $[m]$ obtained by independently including each element with probability $\pn$.  Both tasks we define below are to determine whether an unknown subset $A =\bA \subseteq [m]$ is drawn from $\Ayes$ or $\Ano.$
\red{(For intuition to see that the tasks are reasonable, we observe here that a straightforward Chernoff bound shows that almost every outcome of $\bA \sim \Ayes$ is larger than almost every outcome of $\bA \sim \Ano$.)}
The differences between the two tasks \SSSQ~and \SSEQ~are 1) how an algorithm accesses $A$,  and 2) how we measure the query complexity of an algorithm.

\subsection{Set-Size-Set-Queries (\SSSQ)}\label{task1def}

Let $A$ be a subset of $[m]$ (drawn from $\Ayes$ or $\Ano$).
A (deterministic and non-adaptive) algorithm $\Alg=(g,T)$ for \SSSQ ~accesses $A$ by submitting to an oracle 
  a list of queries $T=(T_1,\ldots, T_k)$ for some $k\ge 1$, where each $T_i \subseteq [m]$ is a set. Thus, we call each $T_i$ a \emph{set query}.
%An algorithm may make the following kinds of queries:
\begin{itemize}
\item  On receiving $T$, the oracle
  returns a sequence of random vectors $\bv=(\bv_1,\ldots,\bv_k)$, where $\bv_i \in \{0, 1\}^{T_i}$ and
  for each $i \in [k]$ and $j\in T_i$, $\bv_{i,j}$ is independently distributed as follows:  if $j\notin A$ then $\bv_{i,j}=0$,  and if $j \in A$ then
\begin{equation} \label{eq:itmfa}
\bv_{i,j} = \begin{cases}1 & \text{with probability ${\color{red}\eps_0}/{\sqrt{n}}$} \\
						 0 & \text{with probability $1-({\color{red}\eps_0}/\sqrt{n})$}. \end{cases} 
\end{equation}
						 Note that $\bv$ is a sequence of random vectors that depends on both $T$ and $A$.
\item After receiving $\bv=(\bv_1,\ldots,\bv_k)$, $\Alg$ returns (deterministically) the value
  $g(\bv)\in \{\text{``yes''}, \text{``no''}\}$.
%We abuse the notation and use $\Alg$ to denote the map
%  from $\bv$ to $\{\text{``yes''}, \text{``no''}\}$.
\end{itemize}

The performance of $\Alg=(g,T)$ is measured by its
 \emph{query complexity} and its \emph{advantage}.\rnote{Renamed this ``advantage'' instead of ``success probability'' -- since it's a difference of two probabilities and could even be negative, it felt weird calling it a ``probability'' :)}
\begin{itemize}

\item The query complexity of $\Alg$ is defined as $\sum_{i=1}^k |T_i|$, the total size of all the set queries.

\item The {advantage} of $\Alg$ is defined as

\[
\Prx_{\bA \sim \Ayes}\big[\Alg(\bA) \text{~outputs ``yes''}\big]
-
\Prx_{\bA \sim \Ano}\big[\Alg(\bA) \text{~outputs ``yes''}\big]
.
\]
For simplicity, we use $\Vyes$ to denote the distribution of responses $\bv$ to $T$ when $\bA \sim\Ayes$, and $\Vno$ to denote the distribution when $\bA \sim \Ano$.
Then the advantage can be written as
\[
\Prx_{\bv \sim \Vyes}\big[\hspace{0.03cm}g(\bv)=\text{``yes''}\big]
-
\Prx_{\bv \sim \Vno}\big[\hspace{0.03cm}g(\bv)=\text{``yes''}\big]
.
\]
\end{itemize}

%A (deterministic) non-adaptive algorithm submits a list of queries $T_1, \dots, T_k$, receives a list of responses $\bR(T_1, %\dots, T_k)$, and outputs either ``yes'' or ``no.''

%We say that
\begin{remark}\label{ob1}
In the definition above, $g$ is a deterministic map from all possible sequences of vectors returned by
  the oracle to ``yes'' or ``no.''
Considering only deterministic as opposed to randomized $g$ is without loss of generality since 
given a query sequence $T$, the highest possible advantage can always be
  achieved by a deterministic map $g$.
The same holds for algorithms for the second task, \SSEQ.
%A (deterministic) non-adaptive $q$-query algorithm thus corresponds to a fixed query sequence $T_1,\dots,T_k$ of complexity $q$. \footnote{It will be sufficient for us to consider deterministic algorithms: for one thing, given any randomized algorithm with a certain advantage, some fixed query sequence in its support must achieve at least as good an advantage.  Also, the actual argument for non-adaptive junta testing via Yao's minimax approach means that we only need to consider fixed deterministic query sequences.}
\end{remark}
%We say that
}

We will prove Lemma~\ref{lem:model1lb} by first giving a reduction from an even simpler
  algorithmic task, which we describe next in Section \ref{task2def}.
We will then prove a lower bound for the simpler task in Section \ref{task2proof}.

\subsection{Set-Size-Element-Queries (\SSEQ)}\label{task2def}

Recall the parameters $m,p,q$ and $\eps$ and the two distributions $\Ayes$ and $\Ano$ 
  used in the definition of problem \SSSQ.
We now introduce a simpler algorithmic task called the Set-Size-Element-Queries (\SSEQ) problem using
  the same parameters and distributions.
  
Let $A$ be a subset of $[m]$ hidden in an oracle.
An algorithm accesses the oracle to tell whether~it is~drawn from $\Ayes$ or $\Ano$.
The difference between \SSSQ\ and \SSEQ\ is the way an algorithm accesses $A$.
In \SSEQ, an algorithm $\Alg'=(h,\ell)$ submits a vector $\ell=(\ell_1,\ldots,\ell_m)$ of nonnegative integers.
\begin{flushleft}\begin{itemize}
\item 
On receiving $\ell$, the oracle returns a random response vector $\bb \in \{0, 1\}^{m}$, where each entry $\bb_i$ is distributed independently as follows:  if $i \notin A$ then $\bb_i=0$, and if $i \in A$ then
%\rnote{Have we defined $\eps_0$?}
\begin{align*}
\bb_i &=   \begin{cases} 1 & \text{with probability $\lambda(\ell_i)$} \\
						0 & \text{with probability $1-\lambda(\ell_i)$} \end{cases},\quad\text{\ \ where\ }
			\lambda(\ell_i) = 1 - \left(1 - \frac{\eps }{\sqrt{n}}\right)^{\ell_i}.
\end{align*}
Equivalently, for each $i\in A$, the oracle independently flips $\ell_i$ coins, each of which is $1$
  with probability $\eps /\sqrt{n}$, and at the end returns
  $\bb_i=1$ to the algorithm if and only if at least one of the coins is $1$. Thus, we refer to each $\ell_i$ as 
 $\ell_i$ \emph{element-queries} for the $i$th element.
\item After receiving the vector $\bb$ from the oracle, $\Alg'$ returns the value $h(\bb)\in \{\text{``yes''},\text{``no''}\}$.  Here $h$ is a deterministic map from $\zo^m$ to $ \{\text{``yes''},\text{``no''}\}$.
\end{itemize}\end{flushleft}
Similar to before, the performance of $\Alg'$ is measured by its query complexity and its advantage:
\begin{flushleft}\begin{itemize}
\item The query complexity of $\Alg'=(h,\ell)$ is defined as $\|\ell\|_1 = \sum_{i=1}^{m} \ell_{i}$.
For its advantage, we let $\Byes$ denote the distribution of response vectors $\bb$ to query $\ell$ when $\bA \sim\Ayes$, and
  $\Bno$ denote the distribution when $\bA \sim \Dno$.
The advantage of $\Alg'=(h,\ell)$ is then defined as
%\[
%\Prx_{\bA \leftarrow \Ayes}\big[\Alg(\bA) \text{~outputs ``yes''}\big]
%-
%\Prx_{\bA \leftarrow \Ano}\big[\Alg(\bA) \text{~outputs ``yes''}\big]
%.
%\]
%For simplicity, we use $\Vyes$ to denote the distribution of responses $\bv$ to $T$ when $\bA \leftarrow\Ayes$, and $\Vno$ to denote %the distribution when $\bA \leftarrow \Ano$.
%Then the advantage can be written as
\[
\Prx_{\bb \sim \Byes}\big[\hspace{0.03cm}h(\bb)=\text{``yes''}\big]
-
\Prx_{\bb \sim \Bno}\big[\hspace{0.03cm}h(\bb)=\text{``yes''}\big]
.
\]
\end{itemize}\end{flushleft}

% to Task 1,
\begin{remark}\label{monotone}
It is worth pointing out (we will use it later) that the highest possible advantage over all deterministic maps $h$ is a monotonically non-decreasing
function of the coordinates of $\ell$.
To see this, let $A$ be the underlying set and let
  $\ell$ and $\ell'$ be two vectors with $\ell_i\le \ell_i'$ for every $i\in [m]$.
Let $\bb$ and $\bb'$ be the random vectors returned by the oracle upon $\ell$ and $\ell'$.
Then we can define $\bb^*$ using $\bb'$ as follows:
  $\bb^*_i=0$ if $\bb'_i=0$; otherwise when $\bb'_i=1$, we set
$$
\bb^*_i=\begin{cases} 1 & \text{with probability $\lambda(\ell_i)/\lambda(\ell_i')$}\\[0.6ex] 0&
\text{with probability $1-\lambda(\ell_i)/\lambda(\ell_i')$} \end{cases}.
$$
One can verify that the distribution of $\bb$ is exactly the same as the distribution of $\bb^*$.  Hence there is a randomized map
$h'$ such that the advantage of $(h',\ell')$ is at least as large as the highest possible
  advantage achievable using $\ell.$  The remark now follows by our earlier observation
  in Remark \ref{ob1} that the highest possible advantage using $\ell'$ is always 
   achieved by a deterministic $h'$.
\end{remark}

The following lemma reduces the proof of Lemma \ref{lem:model1lb} to proving a lower bound  for \SSEQ.

\begin{lemma}
\label{lem:one-to-two}
Given any deterministic and non-adaptive algorithm $\Alg=(g,T)$ for \emph{\SSSQ},
  there is~a deterministic and non-adaptive algorithm $\Alg'=(h,\ell)$ for \emph{\SSEQ}\ with the same query complexity~as $\Alg$ and advantage at least as large as that of
  $\Alg$.
%  and  advantage at least $2/3$ for Task~1, then there exists an algorithm $\Alg_2$ corresponding to a vector $\ell \in \mathbb{N}^n$ with $\|\ell\|_{1} = q$ that has advantage at least $2/3$ for Task~2.
\end{lemma}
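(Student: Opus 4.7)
The plan is to construct the SSEQ algorithm $\Alg'=(h,\ell)$ by coupling its oracle response to that of $\Alg=(g,T)$. First, I would define the query vector $\ell$ by $\ell_i := |\{r \in [d] : i \in T_r\}|$, the number of SSSQ set queries $T_r$ that contain element $i$. This preserves query complexity by double counting:
\[
\|\ell\|_1 \,=\, \sum_{i=1}^m \big|\{r : i \in T_r\}\big| \,=\, \sum_{r=1}^d |T_r|.
\]

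Next I would establish that the SSEQ response $\bb_i$ has the same distribution as the OR of the SSSQ bits $\{\bv_{r,i}\}_{r : i \in T_r}$: each such bit is an independent Bernoulli($\eps/\sqrt{n}$) when $i \in A$ and is $0$ when $i \notin A$, so the OR is $1$ with probability $\lambda(\ell_i) = 1 - (1 - \eps/\sqrt{n})^{\ell_i}$ if $i \in A$ and $0$ otherwise, matching (\ref{eq:itmfa}). I would then couple the two models by taking $\bb_i$ to be literally this OR, and observe the crucial fact that \emph{conditional on $\bb$} the joint distribution of $\bv$ does not depend on $A$: if $\bb_i = 0$ then all the corresponding SSSQ bits are deterministically $0$ regardless of whether $i \in A$, while if $\bb_i = 1$ then $i \in A$ is forced and the corresponding bits are distributed as $\ell_i$ i.i.d.\ Bernoulli($\eps/\sqrt{n}$) variables conditioned on their OR being $1$.

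Given this, I would define the (a priori randomized) map $h$ to act on $\bb \in \{0,1\}^m$ by internally sampling $\bv$ from the explicit $\bb$-conditional distribution just described and then outputting $g(\bv)$. Because the conditional distribution does not depend on $A$, the marginal law of $\bv$ produced by $(h,\ell)$ when $\bA \sim \Ayes$ (respectively $\bA \sim \Ano$) coincides exactly with the SSSQ law of $\bv$ under $\Alg$ when $\bA \sim \Ayes$ (respectively $\bA \sim \Ano$), so the advantages of $\Alg'$ and $\Alg$ agree. Finally I would invoke Remark~\ref{ob1} to replace the randomized $h$ by a deterministic map of at least the same advantage, producing the required non-adaptive deterministic SSEQ algorithm.

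The only real subtlety is the conditional-independence claim: on the event $\bb_i = 0$, the case $i \notin A$ trivially yields all corresponding SSSQ bits equal to $0$, and the case $i \in A$ conditioned on the OR being $0$ also forces every bit to $0$, so the two cases give the same conditional law; on the event $\bb_i = 1$, only the case $i \in A$ is compatible, so there is no ambiguity. Once this is checked, the rest is routine bookkeeping.
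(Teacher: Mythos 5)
Your proposal is correct and follows essentially the same route as the paper: the same definition of $\ell$ via multiplicities, the same observation that the SSSQ bits for element $i$ can be resampled from their conditional law given the OR $\bb_i$ (which is exactly the paper's randomized procedure $P$, with the $\bb_i=1$ case being i.i.d.\ Bernoulli bits conditioned on not all being zero), and the same appeal to Remark~\ref{ob1} to derandomize $h$. The paper verifies the distributional identity by an explicit product-of-probabilities computation rather than your coupling phrasing, but the content is identical.
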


\begin{proof}
We show how to construct $\Alg'=(h ,\ell)$ from $\Alg=(g,T)$, where
  $h $ is a randomized map,
  such that $\Alg'$ has exactly the same query complexity and advantage as those of $\Alg$.
The lemma then follows from the observation we made earlier in Remark \ref{ob1}.

We define $\ell$ first.
Given $T=(T_1,\ldots,T_d)$ for some $d\ge 1$, $\ell=(\ell_1,\ldots,\ell_m)$ is defined as
%Consider an algorithm $\Alg_1$ and a sequence of sets $T = (T_1, \dots, T_k)$ which succeeds in performing Task 1 with $\sum_{i=1}^k |T_i| = q$. We now describe $\Alg_2$. The query for $\Alg_2$ will be given by the vector $\ell \in \mathbb{N}^{n}$ where:
\[ \ell_{j} = \big|\{ i \in [d] : j \in T_i \} \big|. \]
So $\|\ell\|_1=\sum_{i=1}^d |T_i|$.
To define $h$ we describe a randomized procedure $P$ that,
  given any $b\in \{0,1\}^m$, outputs a sequence of random vectors $\bv=(\bv_1,\ldots,\bv_d)$ such that the following claim holds.
%Assume the algorithm receives a response vector $\bR^2(\ell)$. We give a procedure $P$ for outputting response vectors $\bb = \bR(T)$ satisfying the following two properties:
%\begin{enumerate}
%\item If $\bb \leftarrow \Byes$, then $P(\bb)$ is distributed the same as $\Vyes$.
%\item If $\bb \leftarrow \Bno$, then $P(\bb)$ is distributed the same as $\Vno$.
%\end{enumerate}

\begin{claim}
\label{cl:same-dist}
%If $\bA$ is drawn with $\py$, then $P(\bb)$ is distributed like $\bRy(T)$, and if $\bA$ is drawn with $\pn$, then $P(\bb)$ is distributed like $\bRn(T)$.
If $\bb\sim \Byes$ \emph{(}or $\Bno$\emph{)}, then $P(\bb)$ is distributed the same as $\Vyes$ 
  \emph{(}or $\Vno$, respectively\emph{)}.
\end{claim}

Assuming Claim \ref{cl:same-dist}, we can set $h=g\circ P$ and the advantage of $\Alg'$ would be the same as
  that of $\Alg$.
In the rest of the proof,
  we describe the randomized procedure $P$ and prove Claim \ref{cl:same-dist}.

%The algorithm $\Alg_2$ will proceed by making query $\ell$, and outputting $\Alg_1(P(\bR^2(\ell)))$. We now show how to construct a response vector $\bv_i$ for each set $T_i$.
Given $b\in \{0,1\}^m$, $P$ outputs a sequence of random vectors $\bv=(\bv_1,\ldots,\bv_d)$ as follows:
\begin{flushleft}\begin{itemize}
\item If $b_j = 0$, then for each $i\in [d]$ with $j\in T_i$, $P$ sets $\bv_{i,j} = 0$.
\item If $b_j = 1$ (this implies that $\ell_j>0$ and $j\in T_i$ for some $i\in [d]$), $P$
sets $(\bv_{i,j}:i\in [d],j\in T_i)$ to be a length-$r$, where $r=|\{i \in [d]: j \in T_i\}|$, binary string in which each bit is independently 1 with probability $\epsilon /\sqrt{n}$ and 0 with probability $1-\epsilon /\sqrt{n}$, conditioned on its not being $0^r$. 
%for each $i\in [k]$ with $j\in T_i$, we set $(\bv_i)_j = 1$ with probability $\frac{\eps}{\sqrt{n}}$, conditioned on the fact that at least one set $T_i \ni j$ has $(\bv_i)_j = 1$.
\end{itemize}\end{flushleft}
%Now we prove Claim \ref{cl:same-dist}:
\begin{proof}[Proof of Claim \ref{cl:same-dist}]
It suffices to prove that, fixing any
  $A\subseteq [m]$ as the underlying set hidden in the oracle, the distribution of $\bv$ is the same
  as the distribution of $P(\bb)$.
The claim then follows since in the definitions of both $\smash{\Byes}$ and $\smash{\Vyes}$
  (or $\smash{\Bno}$ and $\smash{\Vno}$), $A$ is drawn from $\smash{\Ayes}$ (or $\smash{\Ano}$, respectively).

Consider~a sequence
  $v$ of $d$ vectors $v_1, \dots, v_d$ with $v_i \in \{0, 1\}^{T_i}$ for each $i\in [d]$, and let
  $$n_{j, 1} = |\{ i \in [d] :j\in T_i\ \text{and}\ v_{i,j} = 1 \}|\quad\text{and}\quad
  n_{j, 0} = |\{ i \in [d] : j\in T_i\ \text{and}\  v_{i,j} = 0 \}|,$$
for each $j\in [m]$. Then  the $\bv$ returned by the oracle (in \SSSQ) is equal to $v$ with probability:
\begin{align}\label{pfpff}
\ind\big\{ \forall j \notin A,\hspace{0.04cm} n_{j,1}=0 \big\} %\cdot \prod_{i \in [k]} \prod_{j \in T_i} \Pr\left[ (\bv_i)_j = (v_i)_j \right] \\
										%&= \ind \{ \forall j \notin A, \forall i \in [k], (v_i)_j = 0\} \cdot
									\cdot	\prod_{j \in A} \left( \frac{\eps}{\sqrt{n}}\right)^{n_{j, 1}} \left(1 - \frac{\eps}{\sqrt{n}} \right)^{n_{j, 0}},
\end{align}
since all coordinates $\bv_{i,j}$ are independent. On the other hand, the probability of $P(\bb)=v$ is
\begin{align*}
%\Pr\big[ P(\bb)=v\big]%= (v_1, \dots, v_k) \mid \bA = A \right]
 \ind\big\{ \forall j \notin A,\hspace{0.04cm} n_{j,1} = 0\big\} \cdot 
									  \prod_{j \in A} \left( \ind\big\{n_{j, 0} = \ell_j \big\} \cdot \left(1 - \frac{\eps}{\sqrt{n}} \right)^{\ell_j}  
									   +\hspace{0.06cm} \ind\big\{n_{j, 1} \geq 1\big\}\cdot \left( \frac{\eps}{\sqrt{n}}\right)^{n_{j, 1}} \left( 1 - \frac{\eps}{\sqrt{n}}\right)^{n_{j, 0}}\right),
									%=\Pr\big[\hspace{0.02cm}\bv=v\hspace{0.02cm}\big].
									%&= \Pr\left[ \bR(T) = (v_1, \dots, v_k) \mid \bA = A \right]
\end{align*}
%Finally, noting the fact that $\bA$ is drawn independently and is identically distributed in both $P(\bR^2(\ell))$ and $\bR(T)$, we get the desired result.
which is exactly the same as the probability of $\bv=v$ in (\ref{pfpff}).\end{proof}
%Given Claim~\ref{cl:same-dist}, the algorithm $\Alg_2$ will succeed in performing Task 2 with probability at least $\frac{2}{3}$.

This finishes the proof of Lemma \ref{lem:one-to-two}.
\end{proof}

\subsection{%Simplification of Task 2
A lower bound for SSEQ}\label{task2proof}

We prove the following lower bound for \SSEQ, from which Lemma~\ref{lem:model1lb} follows:

\begin{lemma}\label{lem:model12}
Any deterministic, non-adaptive $\Alg'$ for \emph{\SSEQ} with  %making queries $T_1, \dots, T_k$ which performs Task 1 with probability at least
  advantage at least $2/3$ satisfies
$$\|\ell\|_1> s\eqdef\frac{n^{3/2}}{\eps\cdot  \log^3 n \cdot \log^2(n/\eps)}.$$
\end{lemma}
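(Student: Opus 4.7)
The optimal advantage over all deterministic maps $h$ is exactly $d_{\mathrm{TV}}(\Byes, \Bno)$, so the plan is to upper bound this total variation distance by a function of $\|\ell\|_1$ and show that $\|\ell\|_1 \leq s$ forces the bound below $2/3$. The key structural observation is that $\bb$ has \emph{independent} coordinates under both $\Byes$ and $\Bno$: the indicators $\mathbf{1}[i \in \bA]$ are independent Bernoullis, and the oracle's internal coin flips are independent conditional on $\bA$, so after marginalizing $\bb_i$ is Bernoulli with parameter $p\lambda(\ell_i)$ under $\Byes$ and $q\lambda(\ell_i)$ under $\Bno$, independently across $i \in [m]$.

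Next I would group positions by the value of $\ell_i$: for each integer $L \geq 1$, set $n_L = |\{i : \ell_i = L\}|$ and $\lambda_L = \lambda(L)$, so that $\|\ell\|_1 = \sum_L L \cdot n_L$. Within each group the bits are i.i.d.\ Bernoullis, so by sufficiency the sum $S_L = \sum_{i : \ell_i = L} \bb_i$ carries all statistical information; it is $\Bin(n_L, p\lambda_L)$-distributed under $\Byes$ and $\Bin(n_L, q\lambda_L)$-distributed under $\Bno$, and the $S_L$ are independent across $L$. The triangle inequality for $d_{\mathrm{TV}}$ on product measures therefore gives
\[
d_{\mathrm{TV}}(\Byes, \Bno) \;\leq\; \sum_L d_{\mathrm{TV}}\bigl(\Bin(n_L, p\lambda_L),\,\Bin(n_L, q\lambda_L)\bigr),
\]
and each summand is the TV distance between two binomials on the same number of trials whose success probabilities differ by exactly $(q-p)\lambda_L = (\log n/\sqrt n)\cdot \lambda_L$; this is precisely the type of binomial TV distance bounded by Claim~\ref{lem:dtvbound}.

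I would then apply Claim~\ref{lem:dtvbound} to each summand, use $\lambda_L \leq L\eps/\sqrt n$ to convert the result into a quantity depending only on $n_L$ and $L$, and apply Cauchy--Schwarz to control the resulting $\sum_L \sqrt{n_L L}$-type sum in terms of $\|\ell\|_1 = \sum_L n_L L$. The number of distinct values of $L$ that can contribute must also be bounded, and this is where the monotonicity of the advantage in $\ell$ (Remark~\ref{monotone}) enters: once $L \gtrsim \sqrt n \log(n/\eps)/\eps$, $\lambda_L$ is already $1 - 1/\mathrm{poly}(n)$ and further increase is negligible, so we may assume without loss of generality that at most $\widetilde O(\sqrt n/\eps)$ distinct values of $L$ appear in $\ell$. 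This truncation step is the source of the $\log^2(n/\eps)$ factor in the stated lower bound.

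The main obstacle will be the careful case analysis required by Claim~\ref{lem:dtvbound}: when $\lambda_L$ is small, both binomial parameters lie near $0$ and the TV distance behaves (up to log factors) like $\eta \sqrt{n_L \lambda_L}$, whereas when $\lambda_L$ is close to $1$ the parameters lie in the Gaussian bulk near $1/2$ and the distance behaves like $\eta \sqrt{n_L}$, where $\eta = (q-p) = \log n/\sqrt n$. Splitting the sum over $L$ according to this dichotomy, applying Cauchy--Schwarz within each case, and combining with $\|\ell\|_1 = \sum_L n_L L$ will yield an upper bound on $d_{\mathrm{TV}}(\Byes, \Bno)$ of order $\widetilde O\bigl((\log n/\sqrt n)\sqrt{\eps \|\ell\|_1/\sqrt n}\bigr)$ up to polylogarithmic slack. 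Insisting this be at least $2/3$ and solving for $\|\ell\|_1$ then gives $\|\ell\|_1 > s$, as claimed.
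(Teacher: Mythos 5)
Your overall strategy---bound the advantage by $\dtv(\Byes,\Bno)$, group coordinates by query multiplicity, reduce each group to a pair of binomials via sufficiency, and apply subadditivity of total variation together with a binomial TV bound---is exactly the skeleton of the paper's proof. But there is a genuine gap in how you control the \emph{number} of groups, and it is fatal to the calculation as you set it up. Grouping by the exact value $\ell_i=L$ can produce up to $\Theta(\sqrt{\|\ell\|_1})$ nonempty groups (e.g.\ $\ell=(1,2,3,\dots)$), i.e.\ polynomially many. Subadditivity then charges you for every group: each group contributes roughly $\eta\sqrt{n_L\lambda_L}$ with $\eta=\log n/\sqrt n$, and Cauchy--Schwarz gives $\sum_L\eta\sqrt{n_L\lambda_L}\le \eta\,\eps^{1/2}n^{-1/4}\sqrt{K\|\ell\|_1}$, where $K$ is the number of nonempty groups; with $\|\ell\|_1\le s$ this is $\Theta\bigl(\sqrt{K}/(\log^{1/2}n\cdot\log(n/\eps))\bigr)$, which is $o(1)$ only when $K$ is polylogarithmic. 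Your truncation still leaves $K=\widetilde O(\sqrt n/\eps)$, far too large; indeed, for $\ell=(1,2,\dots,K)$ with $K\approx\sqrt{2s}$ the group-by-group bound sums to a quantity that is polynomially large in $n$, so the argument establishes nothing. The missing idea is the paper's first move: invoke Remark~\ref{monotone} to round each positive $\ell_i$ \emph{up} to the nearest power of two. This at most doubles $\|\ell\|_1$, can only increase the advantage, and collapses the number of distinct multiplicities to $O(\log(n/\eps))$ --- which is precisely what lets the per-group bound of $o(1/L)$ from Claim~\ref{lem:dtvbound} sum to $o(1)$. This rounding, not a truncation of large $\ell_i$, is also the true source of the $\log^2(n/\eps)$ factor in $s$.

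A secondary problem: your truncation uses Remark~\ref{monotone} in the wrong direction. The remark says the optimal advantage is non-decreasing in $\ell$, so when upper-bounding the advantage you may freely \emph{increase} entries; capping large $\ell_i$ \emph{decreases} them, so the capped query's advantage could a priori be smaller than the original's, and bounding it says nothing about the original. (This could be patched by arguing that $\lambda(\ell_i)$ and $\lambda(L_{\max})$ are within $1/\mathrm{poly}(n)$ in that regime, but that is an extra step, and in any case it does not repair the group-count problem above.)
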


\begin{proof}
Assume for contradiction that there is an algorithm $\Alg'=(h,\ell)$
  with  $\|\ell\|_1\le s$ and advantage at least $2/3$.
Let $\ell^*$ be the vector obtained from $\ell$ by rounding each positive $\ell_i$ to
  the smallest power of $2$ that is at least as large as $\ell_i$
  (and taking $\ell_i^*=0$ if $\ell_i=0$).
From Remark \ref{monotone}, there must be a map $h^*$ such that $(h^*,\ell^*)$
  also has advantage at least $2/3$ but now we have 1) $\|\ell^*\|_1\le 2s$
  and 2) every positive entry of $\ell^*$ is a power of $2$.
Below we abuse notation and still use $\Alg'=(h,\ell)$~to denote $(h^*,\ell^*)$: $\Alg'=(h,\ell)$
  satisfies $\|\ell\|_1\le 2s$,
  every positive entry of $\ell$ is a power of $2$, and has
advantage at least $2/3$.
We obtain a contradiction below by showing that any such $\ell$ can only
  have an advantage of $o(1)$.
%We show below that these three properties cannot hold at the same time.

Let $L=\lceil \log (2s)\rceil =O(\log (n/\eps))$.
%Without loss of generality,
Given that $\|\ell\|_1\le 2s$ we can %we may assume that the query vector $\ell \in \mathbb{N}^{n}$ is divided
  partition $\{i\in [m]:\ell_i>0\}$ into $L+1$ sets $C_0,\ldots,C_L$, where bin $C_j$  contains those coordinates $i \in [m]$ with $\ell_i=2^{j}$.
We may make two further assumptions on $\Alg'=(h,\ell)$ that will simplify the lower bound proof:
\begin{flushleft}\begin{itemize}
\item We may reorder the entries in decreasing order and assume without loss of generality that
%. The vector is represented as the $L$ levels, since $\|\ell\|_1 \leq \poly(\frac{n}{\eps})$. In particular, the vector $\ell \in \mathbb{N}^n$ is represented as:
\begin{equation}\label{eqhehe}
\ell = \left( \underbrace{2^{L}, \dots, 2^{L}}_{c_L}, \underbrace{2^{L-1}, \dots, 2^{L-1}}_{c_{L-1}} , \dots, \underbrace{1, \dots, 1}_{c_0}, 0, \dots, 0\right),
\end{equation}
where $c_j=|C_j|$ satisfies $\sum_j c_j\cdot 2^j\le 2s$.  This is without loss of generality since $\Ayes$ and $\Ano$ are symmetric in the coordinates (and so are $\Byes$ and $\Bno$).
\item For the same reason we may assume that the map $h(b)$ depends only on the number of $1$'s of $b$ in
  each set $C_j$,  which we refer~to as the \emph{summary} $S(b)$ of $b$:
$$
S(b)\eqdef \Big(\|b_{|C_L}\|_{1}, \|b_{|C_{L-1}}\|_1, \dots, \|b_{|C_0}\|_1\Big)\in \mathbb{Z}_{\ge 0}^{L+1}.
$$
To see that this is without loss of generality, consider a randomized procedure $P$ that, given $b\in \{0,1\}^m$, 
applies an independent random permutation
  over the entries of $C_j$ for each bin $j\in [0:L]$.
One can verify that the random map $h'=h\circ P$ only depends on
  the summary $S(b)$ of $b$ but achieves the same advantage as $h$.
\end{itemize}\end{flushleft}

\def\Syes{\mathcal{S}_{\text{yes}}}
\def\Sno{\mathcal{S}_{\text{no}}}

 %and $B_\emptyset$ contains $, and by losing a factor of two, we may assume that at level $i$, all coordinates have value $2^{i}$.
%\footnote{Xi: This is because the power of an algorithm
%  is monotone in terms of the number of queries for each coordinate?
%  Maybe need more justification.}

%In fact,  Within each level, since the number of queries to each coordinate is the same, and the coordinate of the response vector $\bR^2(\ell)$ are all independent random variables, the algorithm is symmetric in each level. In other words, the algorithm may achieve the same performance by receiving the values:
%\begin{align}
%\label{eq:sum_rep}
% \bS(\ell) \eqdef (\|\bR^2(\ell)_{|B_i}\|_{1}, \|\bR^2(\ell)_{|B_{i-1}}\|_1, \dots, \|\bR^2(\ell)_{|B_0}\|_1) \in %\mathbb{N}^{L}
% \end{align}
%where the vector $\bR^2(\ell)_{|B_k}$ is the vector $\bR^2(\ell)$ projected on the coordinates in $B_k$.
Given a query $\ell$ as in (\ref{eqhehe}),
  we define $\Syes$ to be the distribution of $S(\bb)$ for $\bb\sim \Byes$
  (recall that $\Byes$ is the distribution of the vector $\bb$ returned by the oracle
  upon the query $\ell$ when $\bA\sim\Ayes$).
Similarly we define $\Sno$ as the distribution of $S(\bb)$ for $\bb\sim \Bno$.
As $h$ only depends on the summary the advantage is at most $\dtv(\Syes, \Sno)$,
  which we upper bound below by $o(1)$.
%   and $\bSn(\ell)$ for the $\bS(\ell)$ when $\bA \ \Dyes$ and $\bA \leftarrow \Dno$, respectively. Therefore, we may assume that the algorithm makes queries of the form:
%\[ (b_i, b_{i-1}, \dots, b_{0}) \in \mathbb{N}^{L} \]
%with
%\[ \sum_{i=0}^{L} b_k2^k \leq 2q. \]

From the definition of $\Byes$ (or $\Bno$, respectively) and the fact that
  $\Ayes$ (or $\Ano$, respectively) is symmetric over the $m$ coordinates,
  we have that the $L+1$ entries of $\Syes$ (of $\Sno$, respectively) are mutually independent, and that 
  their entries for each $C_j$, $j\in [0:L]$, are distributed as
  $\Bin(c_j, \py \lambda_j)$ (as $\Bin(c_j, \pn \lambda_j)$, respectively), where we have
%Note that with the compressed representation of $\bR^2(\ell)$ in Equation~\ref{eq:sum_rep}, for each $k \in [L]$,
$
\smash{\lambda_j = 1 - ( 1 - ( {\eps}/{\sqrt{n}}))^{2^j}.}
$

In order to prove that $\dtv(\Syes, \Sno)=o(1)$ and achieve the desired contradiction, we will give upper bounds on the total variation distance between their $C_j$-entries for each $j\in \{0, \dots, L\}$.
\begin{claim}
\label{lem:dtvbound}
Let $\bX \sim \Bin(c_j, \py \lambda_j)$ and $\bY \sim \Bin(c_j, \pn \lambda_j)$.
Then 
$\dtv(\bX, \bY) \leq o\left(1/L\right).$
\end{claim}

We delay the proof of Claim \ref{lem:dtvbound}, but assuming
  it we may simply apply the following well-known proposition to conclude that
$\dtv(\Syes, \Sno) = o(1)$.
\begin{proposition} [Subadditivity of total variation distance]
Let $\bX = (\bX_1, \dots, \bX_k)$ and $\bY = (\bY_1,$ $\dots, \bY_k)$ be two tuples of independent random variables. Then $\dtv(\bX, \bY) \leq \sum_{i=1}^{k} \dtv(\bX_i, \bY_i)$.
\end{proposition}

This gives us a contradiction and finishes the proof of Lemma \ref{lem:model12}.
\end{proof}

\noindent Below we prove Claim \ref{lem:dtvbound}.

\begin{proof}[Proof of Claim~\ref{lem:dtvbound}]
The claim is trivial when $c_j=0$ so we assume below that $c_j>0$.

Let $r = \py\lambda_j$ and $x =  {\log n \cdot \lambda_j }/{\sqrt{n}}$. Then $\bX \sim \Bin(c_j, r)$ and $\bY \sim \Bin(c_j, r + x)$. As indicated in Equation (2.15) of \cite{adell06}, Equation (15) of \cite{Roos:00} gives
%we have the following upper bound,
\begin{align}
\dtv(\bX, \bY) &\leq O\left( \frac{\tau(x)}{(1-\tau(x))^2} \right),\quad\text{where}\quad
\tau(x)\eqdef x \sqrt{\dfrac{c_j + 2}{2r(1 - r)}}, \label{eq:dtv-bound}
\end{align}
whenever $\tau(x)<1$. Substituting for $x$ and $r$, we have (using $c_j\ge 1$, $r\le 1/2$ and $p=1/2$)
\[ \tau(x) =
O\left(\frac{\log n\cdot \lambda_j}{\sqrt{n}}\cdot \sqrt{\frac{c_j}{r}}\right)=
O\left(\log n \cdot \sqrt{\dfrac{\lambda_j\cdot c_j}{n}} \right) = O\left( \dfrac{1}{L}
\cdot\sqrt{\dfrac{n^{1/2}\cdot  \lambda_j}{2^j\cdot  \eps\cdot \log n}} \right), \]
where the last inequality follows from
$$c_j\cdot  2^j \leq 2s \leq O\left(\dfrac{ n^{3/2}}{\eps\cdot \log^3 n \cdot L^2}\right).$$
Finally, note that (using $1-x>e^{-2x}$ for small positive $x$ and $1-x\le e^{-x}$ for all $x$):
$$
1 - \lambda_j = 
\left( 1 - \frac{\eps }{\sqrt{n}} \right)^{2^j}
\ge \left(e^{-2 \eps/\sqrt{n}}\right)^{2^j}=e^{-2^{j+1}  \eps/\sqrt{n}}\ge 1-O(2^j\eps/\sqrt{n})
$$
and  
$\dfrac{\sqrt{n}\cdot \lambda_j}{2^j \cdot\eps}=O(1)$.
This implies $\tau(x)=o(1/L)=o(1)$.
The claim then follows from (\ref{eq:dtv-bound}).
\end{proof}

\begin{flushleft}
\bibliographystyle{alpha}
\bibliography{allrefs}
\end{flushleft}
\appendix

\section{Proof of Theorem~\ref{thm:main-intro} assuming Theorem~\ref{thm:main}} \label{app:1}

%\begin{proof}[Proof of Theorem~\ref{thm:main-intro} assuming Theorem~\ref{thm:main}]

We prove the following claim in Appendix \ref{app:simpleclaim}.
\begin{claim}\label{simpleclaim}
Let $\eps(n)$ be a  function that satisfies $2^{-n}\le \eps(n)\le 1/5$ for sufficiently large $n$.
Then~any non-adaptive algorithm that accepts the all-$0$ function with probability at least $5/6$ and rejects every
  function that is $\eps$-far from $(n-1)$-juntas with probability
  at least $5/6$ must make $\Omega(1/\eps)$ queries.
\end{claim}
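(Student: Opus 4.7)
The plan is to exhibit a family of Boolean functions that are all $\eps$-far from $(n-1)$-juntas but whose $1$-inputs make up only an $O(\eps)$ fraction of $\zo^n$. Any non-adaptive algorithm with $q = o(1/\eps)$ queries then, with high probability over the random instance, sees only zeros and hence behaves as it does on the all-$0$ function; since the latter must be accepted with probability at least $5/6$, the algorithm cannot simultaneously reject the hard instance with probability at least $5/6$. I would realize this family as random XOR-shifts of a single carefully designed set $T\subseteq\zo^n$, and identifying $T$ is the only non-routine step.

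For the construction, set $k = \lfloor \log_2(1/\eps) \rfloor - 1$, which lies in $[1, n-1]$ because $\eps \in [2^{-n}, 1/5]$, and let
\[
T \;=\; \big\{x \in \zo^n : x_1 = \cdots = x_k = 0 \text{ and } x_{k+1} \oplus \cdots \oplus x_n = 0\big\}.
\]
Then $|T| = 2^{n-k-1}$, and the choice of $k$ yields $\eps \cdot 2^n \le |T| \le 2\eps \cdot 2^n$. The key property, which I would verify in one line, is that for every $x \in T$ and every $i \in [n]$ the neighbor $x^{(i)}$ lies outside $T$: if $i \le k$ then $(x^{(i)})_i = 1$ violates the first block of constraints, and if $i > k$ then the parity of the last $n-k$ coordinates flips. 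Consequently $1_T$ has exactly $|T|$ bichromatic edges in every direction. Since the distance from $1_T$ to the closest $(n-1)$-junta that ignores coordinate $i$ equals the fraction of $i$-bichromatic edges of $1_T$ (the closest such junta is obtained by making each pair $\{x, x^{(i)}\}$ monochromatic), $1_T$ is at distance at least $|T|/2^n \ge \eps$ from every $(n-1)$-junta.

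Now define $\mathbf{f}_{\mathbf{y}}(x) := 1_T(x \oplus \mathbf{y})$ for $\mathbf{y}$ uniform in $\zo^n$. Both Hamming distance and the class of $(n-1)$-juntas are invariant under XOR-shifts, so $\mathbf{f}_{\mathbf{y}}$ is $\eps$-far from every $(n-1)$-junta for every outcome of $\mathbf{y}$; at the same time, for each fixed $x \in \zo^n$, $\Prx_{\mathbf{y}}[\mathbf{f}_{\mathbf{y}}(x) = 1] = |T|/2^n \le 2\eps$. Let $\mathcal{A}$ be any non-adaptive algorithm satisfying the two hypotheses, and expose its internal randomness $\mathbf{r}$ so that for each outcome its queries are $x_1(\mathbf{r}), \dots, x_q(\mathbf{r})$ and its decision is a map $g_\mathbf{r}$. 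A union bound gives $\Prx_{\mathbf{y}, \mathbf{r}}[\mathbf{v} \ne 0^q] \le q \cdot |T|/2^n \le 2q\eps$, where $\mathbf{v}$ denotes the response vector. Therefore
\[
\Prx_{\mathbf{y}, \mathbf{r}}\big[\mathcal{A} \text{ rejects } \mathbf{f}_{\mathbf{y}}\big] \;\le\; \Prx_{\mathbf{y}, \mathbf{r}}\big[\mathbf{v} \ne 0^q\big] \;+\; \Prx_{\mathbf{r}}\big[g_\mathbf{r}(0^q) = \text{reject}\big] \;\le\; 2q\eps + \tfrac{1}{6},
\]
the second inequality because $\mathcal{A}$ accepts the all-$0$ function with probability at least $5/6$. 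Since each $\mathbf{f}_{\mathbf{y}}$ is $\eps$-far from $(n-1)$-juntas, the left-hand side must be at least $5/6$, and rearranging gives $q \ge 1/(3\eps) = \Omega(1/\eps)$, as required.
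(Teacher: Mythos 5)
Your proof is correct, and it takes a genuinely different route from the paper's. The paper handles farness probabilistically and in two cases: for $\eps \ge C\log n/2^n$ it draws a random function whose values are i.i.d.\ $1$ with probability $3\eps$ and uses a Chernoff bound plus a union bound over directions to show the number of bichromatic edges in each direction is large with probability $1-o(1)$; for $1\le 2^n\eps = O(\log n)$, where that concentration fails, it switches to a uniformly random set of exactly $2^n\eps$ points and a birthday-type argument. You instead exhibit a single explicit set $T$ (a subcube intersected with a parity constraint) for which \emph{every} point of $T$ has all $n$ neighbors outside $T$, so $1_T$ deterministically has $|T|\ge \eps 2^n$ vertex-disjoint bichromatic edges in every direction and is $\eps$-far from every $(n-1)$-junta with certainty; randomness enters only through the XOR-shift, which is needed solely to make the density of $1$'s at any fixed query point $O(\eps)$, and shift-invariance of both Hamming distance and the junta class keeps every shifted copy $\eps$-far. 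This buys you a uniform argument with no case split, no concentration inequalities, and explicit constants ($q\ge 1/(3\eps)$); the paper's random constructions are less delicate to set up (no need to design $T$) but require the two-regime analysis. Both establish exactly what is needed: a family of $\eps$-far functions on which $o(1/\eps)$ nonadaptive queries are indistinguishable from the all-$0$ function.
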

Next let $k(n)$ and $\eps(n)$ be the pair of functions from the statement of Theorem \ref{thm:main-intro}.
%Let $C$ be a sufficiently~large constant so that
%  the query lower bounds
%  in Theorem \ref{thm:main} and Claim \ref{simpleclaim} %for
  %$n$-variable functions are at least $ (an^{2/3}/\eps)\cdot \text{polylog}(n^{2/3}/\eps)$ and $a/\eps$ for some positive constant $a$, respectively.
%  kick in for all $n\ge C$.
We consider a sufficiently large $n$ (letting $k=k(n)$ and $\eps=\eps(n)$ below) and separate the proof into two cases: $$2^{- {(2\alpha -1) k}/({{2}\alpha})} \leq \eps \leq  {1}/{6}\quad\text{and}\quad
  2^{-n}\le \eps < 2^{- {(2\alpha -1) k}/({{2}\alpha})}.$$
For the first case, if $k=O(1)$ then the bound we aim for is simply $\widetilde{\Omega}(1/\eps)$, which
  follows trivially from Claim \ref{simpleclaim} (since $k \le \alpha n<n-1$ and the all-0 function is
  a $k$-junta).
Otherwise we~combine~the following reduction with Theorem \ref{thm:main}: any $\eps$-tester for $k$-juntas over $n$-variable functions
  can be used to obtain an $\eps$-tester for $k$-juntas over $(k /\alpha)$-variable functions.
This can be done by adding $n-k/\alpha$ dummy variables to any $(k /\alpha)$-variable function to
  make the number of variables $n$ (as $k \le \alpha n$).
The lower bound then follows from Theorem \ref{thm:main} since $\alpha$ is a constant.
%. We simply note that we view a function $f \colon \{0, 1\}^n \to \{0, 1\}$ as containing more variables $f \colon \{0, 1\}^{n'}\to \{0, 1\}$ which are not influential without changing the distance to $k$-juntas. Thus, if we want to $(\alpha n, \eps)$-test for juntas, we view an unknown function $f \colon \{0, 1\}^n \to \{0, 1\}$ as a $f\colon \{0, 1\}^{n'} \to \{0, 1\}$ where $n' = k^{-1}(\alpha n)$ \enote{do we need to say something about $k$ having an inverse, or picking a value which is a bit larger, but only by polylog factors...?} Thus, $(k(n), \eps)$-tester for juntas, for large enough $k$, could be used as an $(\alpha n, \eps)$-tester for juntas, giving us a desired $\widetilde{\Omega}(k^{3/2}/\eps)$ lower bound.
For the second case,
%When $k = O(1)$ or $2^{-n} \leq \eps \leq 2^{-\frac{(2\alpha - 1) k}{3\alpha}}$,
  the lower bound claimed in Theorem~\ref{thm:main-intro} is $\widetilde{\Omega}(1/\eps)$, which follows again from Claim \ref{simpleclaim}.
   %\footnote{Jinyu: What about the case when $2^{-n} \le \epsilon(n) \le 2^{-(2\alpha -1)k/(3\alpha)}$ and $k>\omega(1)$?} }
  %This can be easily argued by considering the constant $0$-function (which is a $k$-junta), and a function which %is all 0 except for $2\eps \cdot 2^n$ many randomly chosen values; and is $\eps$-far from being an $n-1$ junta %with high probability. Any algorithm which makes at most $0.001/\eps$ queries will have all its queries be 0 with %high probability.} 
This concludes the proof of Theorem~\ref{thm:main-intro} given Theorem~\ref{thm:main} and Claim~\ref{simpleclaim}.
\qed
%\end{proof}

\subsection{Proof of Claim \ref{simpleclaim}}\label{app:simpleclaim}

Let $C$ be a sufficiently large constant.
We prove Claim \ref{simpleclaim} by considering two cases: 
$$
\eps\ge {\frac {C\log n}{2^n}} \quad\text{and}\quad \eps< {\frac {C\log n}{2^n}}.
$$
For the first case of $2^n\eps\ge C\log n$, we use 
 $\calD_1$ to denote the following distribution over $n$-variable Boolean functions: to draw $\bg \sim \calD_1$, independently for each $x \in \zo^n$ the value of $\bg(x)$ is set to~$0$ with probability $1-3\eps$ (recall that $\eps\le 1/5$)
  and $1$ with probability $3\eps$.
%It is clear~that any deterministic non-adaptive algorithm that distinguishes $f$  from $\bg\leftarrow\calD$
%  must make $\Omega(1/\eps)$ queries; otherwise all the query results are $0$ with high probability.
%Claim \ref{simpleclaim} follows from the next lemma:

We prove the following lemma for the distribution $\calD_1$:

\begin{lemma}
With probability at least $1-o(1)$, $\bg\sim \calD_1$ is $\eps$-far from
  every $(n-1)$-junta.
\end{lemma}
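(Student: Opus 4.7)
The plan is to reduce $\eps$-farness from $(n-1)$-juntas to a lower bound on the number of bichromatic edges of $\bg$ along every coordinate direction. Any $(n-1)$-junta $h$ must ignore at least one coordinate $i$, so it suffices to show that for every $i \in [n]$, even the closest function to $\bg$ that ignores coordinate $i$ is at distance at least $\eps$ from $\bg$. For fixed $i$, such a closest function is obtained by setting each pair $(x, x^{(i)})$ to the majority value of $\{\bg(x), \bg(x^{(i)})\}$; its distance to $\bg$ equals $\bB_i / 2^n$, where $\bB_i$ is the number of $\bg$-bichromatic edges along direction $i$ (a bichromatic edge contributes exactly one point of disagreement, a monochromatic edge zero). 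So it suffices to show that with probability $1 - o(1)$, $\bB_i \ge \eps \cdot 2^n$ holds simultaneously for every $i \in [n]$.

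Fix $i \in [n]$. The $2^{n-1}$ edges along direction $i$ partition $\{0,1\}^n$ into vertex-disjoint pairs, so the bichromaticity indicators of distinct such edges are mutually independent Bernoulli random variables. Each edge is bichromatic with probability $2\cdot 3\eps(1-3\eps) \ge 12\eps/5$, using $\eps \le 1/5$. Hence $\mathbb{E}[\bB_i] \ge (12\eps/5)\cdot 2^{n-1} = (6/5)\,\eps \cdot 2^n$, and the hypothesis $\eps \ge C\log n / 2^n$ gives $\mathbb{E}[\bB_i] \ge (6C/5)\log n$. A multiplicative Chernoff bound for sums of independent Bernoullis yields
\[
\Pr\bigl[\bB_i < \eps\cdot 2^n\bigr] \;\le\; \Pr\bigl[\bB_i \le (5/6)\,\mathbb{E}[\bB_i]\bigr] \;\le\; \exp\bigl(-\Omega(\mathbb{E}[\bB_i])\bigr) \;\le\; n^{-\Omega(C)}.
\]
Choosing $C$ to be a sufficiently large absolute constant makes each of these tail probabilities at most $n^{-2}$, and a union bound over the $n$ choices of $i$ gives $\Pr[\exists i\in[n]:\bB_i < \eps \cdot 2^n] \le 1/n = o(1)$, as required.

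The only delicate point is balancing the Chernoff tail against the $n$-way union bound over coordinate directions; this requires $\mathbb{E}[\bB_i] = \Omega(\log n)$, i.e.\ $\eps \cdot 2^n = \Omega(\log n)$, which is exactly the hypothesis $\eps \ge C\log n / 2^n$ of this case (with $C$ chosen large enough). The complementary regime $\eps < C\log n / 2^n$ is outside the scope of this lemma and will be handled separately in the proof of Claim~\ref{simpleclaim}.
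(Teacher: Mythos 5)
Your proof is correct and follows essentially the same route as the paper's: lower-bounding the distance to any function ignoring coordinate $i$ by the number of $\bg$-bichromatic edges along direction $i$ divided by $2^n$, using independence of the edge indicators, a multiplicative Chernoff bound with $\mathbb{E}[\bB_i]\ge (6/5)\,C\log n$, and a union bound over the $n$ directions. The only difference is that you spell out more explicitly why the minimum distance equals $\bB_i/2^n$, which the paper states as an "easy lower bound."
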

\begin{proof}
Note that every $(n-1)$-junta is such that for some $i \in [n]$, the function does not depend on the $i$-th variable; we refer to such a function as a type-$i$ junta.
An easy lower bound for the distance from a function $g$ to all type-$i$ juntas is the number of
  $g$-bichromatic edges $(x,x^{(i)})$ divided by $2^n$.  When $\bg \sim \calD_1$ each edge $(x,x^{(i)})$ is independently $\bg$-bichromatic with probability 
  $6\eps(1-3\eps)\ge 12\eps/5$ (as $\eps\le 1/5$).
Thus when $2^n\eps\ge C\log n$,
  the expected number of such edges is at least %bichromatic edges of $\bg$ along the $i$th direction is at least
$$
2^{n-1}\cdot (12\eps/5)\ge (6/5)\cdot 2^n\eps\ge (6/5)\cdot C\log n.
$$
Using a Chernoff bound, the probability of having fewer than $2^n\eps$ bichromatic edges along direction $i$
  is at most $1/n^2$ when $C$ is sufficiently large. 
The lemma follows from a union bound over $i$. %In the second case we have $1\le 2^n\eps\le O(\log n)$.
\end{proof}

As a result, when $2^n \eps \geq C \log n$, if $\calA$ is a non-adaptive algorithm with the property described in Claim~\ref{simpleclaim},  then $\calA$ must satisfy
$$
\Prx\big[\text{$\calA$ accepts the all-0 function}\hspace{0.03cm}\big]-\Prx_{\bg\sim\calD_1}\big[\text{$\calA$ accepts $\bg$}\hspace{0.03cm}\big]
\ge 2/3-o(1).
$$
But any such non-adaptive algorithm must make $\Omega(1/\eps)$ queries as otherwise with high probability all of its queries to $\bg \sim \calD_1$ would be answered 0, and hence its behavior would be the same as if it were running on the all-0 function.

Finally we work on the case when $1\le 2^n\eps=O(\log n)$.
The proof is the same except that we let $\bg$ be drawn from $\calD_2$, which we define to be the distribution where 
  all entries of $\bg\sim\calD_2$ are $0$ except for exactly $2^n\eps$ of them picked uniformly at random.
The claim follows from the following lemma:
\begin{lemma}
With probability at least $1-o(1)$, $\bg\sim \calD_2$ is $\eps$-far from
  every $(n-1)$-junta.
\end{lemma}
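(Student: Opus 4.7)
The plan is to mirror the proof of the preceding lemma but now exploit the fact that $\bg\sim\calD_2$ has exactly $N\eqdef\lceil 2^n\eps\rceil$ ones (so $N=O(\log n)$ in the current regime) placed uniformly at random in $\{0,1\}^n$. As before, every $(n-1)$-junta fails to depend on at least one coordinate $i\in[n]$, and the distance from any $g$ to its nearest \emph{type-$i$ junta} (one not depending on $x_i$) is exactly the number of $g$-bichromatic edges along direction $i$, divided by $2^n$. So it suffices to show that with probability $1-o(1)$, every direction $i\in[n]$ carries at least $2^n\eps$ many $\bg$-bichromatic edges.

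The key observation is that since $\bg$ has exactly $N\ge 2^n\eps$ ones, the number of $\bg$-bichromatic edges in direction $i$ equals $N-2C_i$, where $C_i$ counts the unordered pairs $\{x,x^{(i)}\}$ both of whose endpoints lie among the $N$ chosen ones (the \emph{$i$-collisions}). Hence it suffices to prove that $\sum_{i=1}^{n}C_i=0$ with probability $1-o(1)$.

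For a uniformly random $N$-subset of $\{0,1\}^n$, any specific pair $\{x,x^{(i)}\}$ is fully contained in the subset with probability $N(N-1)/(2^n(2^n-1))$. Summing over the $2^{n-1}$ such pairs per direction and over all $n$ directions gives
\[
\E\!\left[\sum_{i=1}^{n} C_i\right]\leq n\cdot 2^{n-1}\cdot \frac{N(N-1)}{2^n(2^n-1)}=O\!\left(\frac{nN^2}{2^n}\right)=O\!\left(\frac{n\log^2 n}{2^n}\right)=o(1),
\]
using $N=O(\log n)$ from the assumption $2^n\eps=O(\log n)$. Markov's inequality then yields $\Pr\bigl[\sum_i C_i\ge 1\bigr]=o(1)$, so with probability $1-o(1)$ every direction sees exactly $N\ge 2^n\eps$ bichromatic edges, whence $\bg$ is $\eps$-far from every $(n-1)$-junta. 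I do not anticipate any real obstacle: the whole argument rests on the elementary fact that when $\bg$ is extremely sparse---only $O(\log n)$ ones among $2^n$ points---no two of its ones are likely to be axis-adjacent in any direction, so each direction already contributes $N$ vertex-disjoint bichromatic edges on its own.
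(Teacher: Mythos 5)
Your proof is correct and follows exactly the paper's (very terse) argument: the paper likewise observes that with probability $1-o(1)$ no two of the $2^n\eps$ chosen points form an edge, whence every direction carries $2^n\eps$ bichromatic edges. You have simply made the first-moment/Markov computation behind that observation explicit, which is a faithful elaboration rather than a different route.
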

\begin{proof}
This follows from the observation that, with probability $1-o(1)$,
  no two points picked form an edge.
When this happens, we have $2^n\eps$ bichromatic edges along the $i$th direction for all $i$.
\end{proof}
\end{document}